\documentclass{article}
\usepackage{amsmath,amssymb,mathrsfs}
\usepackage{graphics}
\usepackage{graphicx}
\usepackage{titlesec}
\usepackage{subfigure}
\graphicspath{{figures/png/}}
\usepackage{amsfonts}
\usepackage{latexsym}
\usepackage{url}
\usepackage{ifthen}  
\usepackage{bbm}

\newtheorem{theorem}{Theorem}
\newtheorem{corollary}{Corollary}
\newtheorem{lemma}{Lemma}
\newenvironment{proof}
{\noindent {\bf Proof.}}
{$\blacksquare$}
\renewcommand{\>}{\rangle}
\newcommand{\<}{\langle} 
\newcommand{\ip}{{\bf i'}}
\newcommand{\jp}{{\bf j'}}

\setcounter{secnumdepth}{4}
\titleformat{\paragraph}
{\normalfont\normalsize\bfseries}{\theparagraph}{1em}{}
\titlespacing*{\paragraph}
{0pt}{3.25ex plus 1ex minus .2ex}{1.5ex plus .2ex}
\begin{document}
\title{Quantum networks: Anti-core of spin chains}
\author{E. Jonckheere, F. C. Langbein and S. Schirmer}
\maketitle

\begin{abstract}
The purpose of this paper is to exhibit a quantum network \linebreak
phenomenon---the anti-core---that goes against the classical network
concept of congestion core.  Classical networks idealized as infinite,
Gromov hyperbolic spaces with least-cost path routing (and subject to a
technical condition on the Gromov boundary) have a congestion core,
defined as a subnetwork that routing paths have a high probability of
visiting.  Here, we consider quantum networks, more specifically spin
chains, define the so-called maximum excitation transfer probability
$p_{\max}(i,j)$ between spin $i$ and spin $j$, and show that the
central spin has among all other spins the lowest probability of being
excited or transmitting its excitation.  The
anti-core is singled out by analytical formulas for
$p_{\mathrm{max}}(i,j)$, revealing the number theoretic properties of
quantum chains. By engineering the chain, we
further show that this probability can be made vanishingly small.  

\end{abstract}

\section{Introduction}

Probably the most significant result of the Gromov analysis of classical
networks~\cite{4_point,scaled_gromov} is existence of a congestion core.  Under a network protocol
that sends the packets along least cost paths, the {\it core} can be qualitatively
defined as a point where most of the geodesics (least cost paths)
converge, creating packet drops, high retransmission rates, and other
nuisances under the TCP-IP protocol~\cite{mingji_thesis}.  Existence of the
core has been experimentally observed~\cite{arXiv_dmitri} and
mathematically proved~\cite{internet_mathematics} if the network is
Gromov hyperbolic, subject to some highly technical conditions related
to the Gromov boundary~\cite{baryshnikov_tucci}.  A Gromov hyperbolic
network can intuitively be defined as a network that ``looks like'' a
negatively curved Riemannian manifold (e.g., a saddle) when viewed from
a distance. See, e.g.,~\cite{BridsonHaefliger1999} for a precise definition. 

Next to classical networks, one can envision quantum networks: the nodes
are spins that can be up $\left\vert\uparrow\right\>$ (not excited) or
down $\left|\downarrow \right\>$ (excited) and the links are quantum
mechanical couplings of the XX or Heisenberg type.  Given some random
source-destination pair $(i,j)$, a valid question is whether some spin
$\omega$ could act as a ``core,'' that is, a spin that could be excited
no matter what the source and the destination are.  For a linear chain,
one would expect such a congestion core in the center as classically any
excitation in one half of the chain would have to transit the center of
the chain to reach the other half.  In this work, we demonstrate that
quantum-mechanically the transmission of excitations does not need to
occur this way, and in fact the center $\omega$ of an odd-length spin
chain can act as an ``anti-core,'' excitation of which is avoided.

This ``anti-core," or ``anti-gravity'' center as it was originally
called, was first observed in~\cite{first_with_sophie}.  The anti-core
$\omega$ was defined as a point of high inertia, $\sum_i
d^\alpha(i,\omega)$, $\alpha \geq 1$, as opposed to the classical
congestion core that has minimum inertia owing to the negative curvature
of the underlying space~\cite[Theorem 3.2.1]{Jost1997}.  The inertia
quantifies how difficult communication to and from the anti-core is.

As it has been done along this line of work, a pre-metric
$d(\cdot,\cdot)$ based on the {\it Information Transfer Capacity (ITC)}
(see Sec.~\ref{s:homogeneous}) is employed.  Unlike standard quantum
mechanical distances~\cite{polarized_distance,arccos},~\cite[pages
412-413]{book_nielsen}, this ``distance" measure aims to quantify not
how distant two fixed quantum states are, but how close to a desired
target state a quantum state can get under the evolution of a particular
Hamiltonian.  The initial and target states are typically orthogonal.

In this paper, we provide an {\it analytical} justification of the
numerically observed anti-core phenomenon in spin chains with XX coupling, 
starting with finite-length chains, extending the ITC concept to semi- and
bi-infinite cases (Sec.~\ref{s:infinite_chains}), 
and finally {\it proving} that $d(\omega,j) \geq d(i,j)$, $\forall j \ne \omega$ in Sec.~\ref{s:anti_core}.  
We further show that by adding a bias on the central
spin its {\it ``anti-core''} property can be made stronger in the
sense that the probability of transmission of the excitation to and from
it is infinitesimally small (Sec.~\ref{s:engineered}).  

The remaining nagging question is why was it observed
in~\cite{first_with_sophie} that spin chains appear Gromov-hyperbolic
and have an anti-core, while classical networks are Gromov hyperbolic
with the opposite core?  This will be clarified in
Sec.~\ref{s:engineered} by means of a spin chain example, showing that
its Gromov boundary has {\it only one} point, while classical networks
need to have {\it at least two points} in their Gromov boundary for the
core to emerge.

\section{Metrization of homogeneous spin chains}
\label{s:homogeneous}

We consider a linear array of two-level systems (spin $\tfrac{1}{2}$
particles) with uniform coupling between adjacent spins (homogeneous
spin chain) made up of an odd number $N$ of physically equally spaced
spins with coupling Hamiltonian
\begin{equation*}
      H = \sum_{i=1}^{N-1} 
      \left(\sigma^x_i\sigma^x_{i+1} + \sigma^y_i\sigma^y_{i+1}
                             + \epsilon\sigma^z_i\sigma^z_{i+1} \right).
\end{equation*}
Here we shall be primarily interested in the case of XX coupling, for
which $\epsilon=0$.  The factor $\sigma^{x,y,z}_i$ is the Pauli matrix along
the $x,y,$ or $z$ direction of spin $i$ in the array, i.e.,
\begin{equation*}
  \sigma^{x,y,z}_i =
   I_{2\times 2} \otimes \ldots \otimes I_{2 \times 2} \otimes
   \sigma^{x,y,z}\otimes I_{2\times 2}\otimes \ldots \otimes I_{2 \times 2},
\end{equation*}
where the factor $\sigma^{x,y,z}$ occupies the $i$th position among the
$N$ factors and $\sigma^{x,y,z}$ is one of the single spin Pauli
operators
\begin{equation*}	   
    \sigma^x= \begin{pmatrix} 0 & 1 \\ 1 & 0 \end{pmatrix}, \quad
    \sigma^y= \begin{pmatrix} 0 & -\imath \\ \imath & 0  \end{pmatrix}, \quad
    \sigma^z= \begin{pmatrix} 1 & 0  \\ 0 & -1 \end{pmatrix}.
\end{equation*}
It is easily seen that $H$ is {\it real} and symmetric.

\subsection{Single excitation subspace }

The $2^N \times 2^N$ Hamiltonian commutes with the operator
$S=\sum_{i=1}^N \sigma_i^{z}$ which counts the total number of
excitations.  The Hilbert space can therefore be decomposed into
subspaces corresponding to the number of excitations.  Define
$|i\>=|\uparrow \cdots \uparrow \downarrow \uparrow \cdots \uparrow\>$
to be the quantum state in which the excitation is on spin $i$.  
The single excitation subspace $\mathcal{H}_1$ is spanned by $\{|i\>:
i=1,\ldots,N\}$.  Restricted to this subspace, the Hamiltonian in this
natural basis takes the form
\begin{equation*} 
 H_1 = \begin{pmatrix}
\epsilon & 1 & \ldots & 0 & 0 & 0 & \ldots & 0 & 0 \\
1 & 0 & \ldots & 0 & 0 & 0 & \ldots & 0 & 0 \\
\vdots & \vdots & \ddots & \vdots & \vdots & \vdots & & \vdots & \vdots \\
0 & 0 & \ldots & 0 & 1 & 0 & \ldots & 0 & 0 \\
0 & 0 & \ldots & 1 & 0 & 1 & \ldots & 0 & 0 \\
0 & 0 & \ldots & 0 & 1 & 0  & \ldots & 0 & 0\\
\vdots   & \vdots & & \vdots & \vdots & \vdots & \ddots & \vdots & \vdots \\
0 & 0 & \ldots & 0 & 0 & 0 & \ldots & 0 & 1 \\
0 & 0 & \ldots & 0 & 0 & 0 & \ldots & 1 & \epsilon
\end{pmatrix}. 
\end{equation*}
For XX coupling ($\epsilon=0$), $H_1$ becomes the $N \times N$ Toeplitz
matrix $T_N$ made up of zeros on the diagonal, ones on the super- and
subdiagonal and zeros everywhere else.  Table~\ref{t:eigenstructure}
shows the eigenvalues and eigenvectors of $H_1$. 

\begin{table}
\begin{center}
\begin{tabular}{|r|c|c|}\hline\hline
 & $\lambda_k$ & $v_{kj}$ \\ \hline\hline 
&& \\
XX coupling ($\epsilon=0$) & $2\cos\frac{\pi k}{N+1}$ & $\sqrt{\frac{2}{N+1}}\sin\frac{\pi jk}{N+1}$ \\ 
&&\\ \hline
\end{tabular}
\end{center}
\caption{Eigenvalues and eigenvectors of Single Excitation Subspace
Hamiltonian $H_1$ under XX coupling~\cite{sophie_long}.}  \label{t:eigenstructure}
\end{table}

\subsection{Information Transfer Capacity (ITC) semi-metric}

The probability for the system to transfer from state $|i\>$ to state
$|j\>$ in an amount of time $t$, that is, the probability of transfer of
the excitation from spin $i$ to spin $j$ in an amount of time $t$,
is
\begin{equation*} 
 p_t(i,j)=|\<i|e^{-\imath H_1 t}|j\>|^2.
\end{equation*}
Observe that $\sum_{j=1}^Np_t(i,j)=1$.  In order to remove the
dependency of the probability distribution on the time, we proceed as in
Refs~\cite{quantum_rome,first_with_sophie}:
\begin{subequations}
\label{e:pmax}
\begin{align}
p_t(i,j)&=    \left|\sum_{k=1}^N e^{-\imath \lambda_k t}
 \<i|v_k\>\<v_k|j\>\right|^2  \\
        &\leq \left|\sum_{k=1}^N |\<i|v_k\>\<v_k|j\>|\right|^2 =:p_{\mathrm{max}}(i,j).
\end{align}
\end{subequations}
We refer to $p_{\mathrm{max}}(i,j)$ as {\it maximum transfer
probability} from $|i\>$ to $|j\>$ or {\it Information Transfer Capacity
(ITC)} between $|i\>$ and $|j\>$.  Its explicit formulation for XX
chains is easily obtained from~\eqref{e:pmax} and
Table~\ref{t:eigenstructure}:
\begin{equation}
\label{e:explicitpmax} 
\sqrt{p_{\mathrm{max}}(i,j)}=
\frac{2}{N+1}\sum_{k=1}^{2n+1} 
  \left|\sin\frac{\pi k i}{2(n+1)}\right|\left|\sin\frac{\pi k j}{2(n+1)}\right|.
\end{equation}

\begin{lemma}
\label{l:pmaxii=1}
$p_{\mathrm{max}}(i,j)\leq1$ and $p_{\mathrm{max}}(i,i)=1$ for all $i,j=1,\ldots,N$.
\end{lemma}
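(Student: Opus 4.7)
The plan is to exploit the fact that the eigenvectors $\{v_k\}_{k=1}^N$ of $H_1$ form an orthonormal basis of the single-excitation subspace $\mathcal{H}_1$, since $H_1$ is real symmetric. This immediately gives Parseval's identity $\sum_{k=1}^N |\langle i|v_k\rangle|^2 = \||i\rangle\|^2 = 1$ for every standard basis state $|i\rangle$, and analogously for $|j\rangle$.

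First I would handle the diagonal case $i=j$. Starting from the definition
\[
\sqrt{p_{\max}(i,i)} = \sum_{k=1}^N |\langle i|v_k\rangle \langle v_k|i\rangle| = \sum_{k=1}^N |\langle i|v_k\rangle|^2,
\]
Parseval's identity yields the sum is $1$, so $p_{\max}(i,i)=1$.

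Next, for arbitrary $i,j$, I would apply the Cauchy--Schwarz inequality to the sum
\[
\sqrt{p_{\max}(i,j)} = \sum_{k=1}^N |\langle i|v_k\rangle|\,|\langle v_k|j\rangle|
\leq \left(\sum_{k=1}^N |\langle i|v_k\rangle|^2\right)^{1/2}\!\left(\sum_{k=1}^N |\langle v_k|j\rangle|^2\right)^{1/2}.
\]
Each factor equals $1$ by Parseval, giving $\sqrt{p_{\max}(i,j)}\leq 1$ and therefore $p_{\max}(i,j)\leq 1$.

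There is no real obstacle here: the statement is essentially a restatement of Parseval plus Cauchy--Schwarz, and it does not use any structure specific to the XX chain beyond the orthonormality of the eigenbasis of $H_1$. The only thing worth double-checking is that the absolute values inside the sum do not cause trouble with the Cauchy--Schwarz step, which they do not, because $|\langle i|v_k\rangle|$ and $|\langle v_k|j\rangle|$ are nonnegative reals to which the classical Cauchy--Schwarz inequality applies directly.
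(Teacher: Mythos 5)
Your proof is correct and follows essentially the same route as the paper: orthonormality of the eigenbasis (Parseval) gives $p_{\mathrm{max}}(i,i)=1$, and Cauchy--Schwarz gives $p_{\mathrm{max}}(i,j)\leq 1$. The only cosmetic difference is that the paper obtains the diagonal case by setting $t=0$ in (\ref{e:pmax}a) rather than evaluating the defining sum directly, which amounts to the same computation.
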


\begin{proof}
$p_{\max}(i,i)=1$ follows directly from (\ref{e:pmax}a), setting $i=j$
and $t=0$ and noting that the eigenvectors $|v_m\>$ form an orthonormal
basis.  $p_{\rm max}(i,j)\leq 1$ then follows from a Cauchy-Schwartz
argument.
\end{proof}

The preceding lemma tells us that in order to define a (pre)metric from
$p_{\mathrm{max}}(i,j)$, it is legitimate to take the $\log$ and define
\begin{equation} 
      d(i,j):=-\log p_{\mathrm{max}}(i,j)
\end{equation}
on the single excitation subspace of the chain.  From
Lemma~\ref{l:pmaxii=1}, $d(i,i)=0$ and $d(i,j) \geq 0$, and clearly
$d(i,j)=d(j,i)$.  Observe, however, that $d(i,j)$ can vanish for $i\neq j$
and the triangle inequality need not be satisfied, so that $d(i,j)$ is
just a pre-metric, but this will be sufficient for our purposes.

The definition of $d(i,j)$ bears some commonality with sensor
networks~\cite{eurasip_clustering}, where the Packet Reception Rate
$\mathrm{PRR}(i,j)$ from sensor $i$ to sensor $j$---that is, the
probability of successful transmission of packets from $i$ to
$j$---defines a premetric $d(i,j)=-\log \mathrm{PRR}(i,j)$.

\section{Infinite chains}
\label{s:infinite_chains}

In this section, we develop some asymptotic formulas for
$\sqrt{p_{\mathrm{max}}(i,j)}$ for infinite-length chains in order to show that
the central spin $n+1$ of a chain of odd length $N=2n+1$ has the lowest
probability of being excited, hence justifying the terminology of
``anti-core,'' even for $N \to \infty$.  
This will further reveal a classical-quantum discrepancy: 
Classical dynamical systems
interconnected in an homogeneous infinite chain architecture exhibit the
so-called {\it shift-invariance,} that is, those dynamical interactions
depending on the positions $i$ and $j$ of two systems in the chain in
fact depend only on the distance $|i-j|$.  As a corollary of the
asymptotic formulas, this well known shift-invariance does not carry
over to the quantum chains---no matter how the chain is extended to
infinity, two spins in their transfer probability interaction keep
properties specific to some number theoretic properties of their
positions $i$ and $j$.  Moreover, in a classical chain, the interaction
at infinity is insensitive to the way the limit is taken: either the
chain starts at a specific system, say $1$, and extends to infinity as
\[
  (1,2,3,...), \quad (\mbox{``semi-infinite chain,'' written }\rightarrow) 
\]
or the chain starts at its center $\omega$ and extends both ways as 
\[ 
   (...,\omega-2,\omega-1,\omega,\omega + 1, \omega+2,...), 
   \quad \mbox{(``doubly-infinite chain'' written} \leftrightarrow).
\] 
It is another quantum mechanical effect that the two infinite chains do
not yield the same asymptotic transfer probabilities.

\subsection{Semi-infinite chains}

\begin{theorem}
\label{t:semi_infinite_chains}
For a semi-infinite XX chain, the maximum transition probabilities are
given by
\begin{align*}
\sqrt{p_{\mathrm{max}}^{\rightarrow}(i,j)}
&= \frac{4}{\pi^2}\left( 2+\sum_{m=2,4,...} \frac{4}{(m^2{\bf j}^2-1)(m^2{\bf i}^2-1)} \right)\\
&= \frac{8}{\pi^2} \left( 
   \frac{\bf{i}^2}{\bf{i}^2-\bf{j}^2} \left(\frac{\pi}{2\bf{i}} \right)
                                       \cot\left(\frac{\pi}{2\bf{i}} \right)
    -\frac{\bf{j}^2}{\bf{i}^2-\bf{j}^2} \left(\frac{\pi}{2\bf{j}}\right)
                                        \cot\left(\frac{\pi}{2\bf{j}} \right)\right), 
\end{align*}
where ${\bf{i}}=i/\mathrm{gcd}(i,j)$ and ${\bf{j}}=j/\mathrm{gcd}(i,j)$, 
and $\mathrm{gcd}(i,j)$ denotes the greatest common divisor of $i$ and $j$. 
\end{theorem}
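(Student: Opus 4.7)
The plan is to pass from the finite-chain formula \eqref{e:explicitpmax} to the semi-infinite limit by recognizing a Riemann sum. Writing $x_k = k/(2(n+1))$ and $\Delta x = 1/(2(n+1))$, the identity \eqref{e:explicitpmax} becomes $\sqrt{p_{\max}(i,j)} = 2\sum_{k=1}^{2n+1}|\sin(\pi i x_k)|\,|\sin(\pi j x_k)|\,\Delta x$, where the $x_k$ partition $(0,1]$ uniformly. Since $i,j$ are fixed and the integrand is continuous, sending $n\to\infty$ gives
\begin{equation*}
\sqrt{p_{\max}^{\rightarrow}(i,j)} \;=\; 2\int_0^1 |\sin(\pi i x)|\,|\sin(\pi j x)|\,dx.
\end{equation*}

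To evaluate this integral I would use the standard Fourier expansion $|\sin(\pi i x)| = \tfrac{2}{\pi} - \tfrac{4}{\pi}\sum_{m=2,4,\ldots}\frac{\cos(m\pi i x)}{m^2-1}$, and analogously for $j$. Multiplying the two series and integrating term-by-term over $[0,1]$, the purely single-cosine integrals vanish because $mi$ is a nonzero integer, while $\int_0^1\cos(m\pi i x)\cos(m'\pi j x)\,dx$ equals $1/2$ exactly when $mi = m'j$ and $0$ otherwise. The constant-times-constant contribution is $4/\pi^2$, which after the Riemann-limit prefactor of $2$ accounts for the ``$2$'' inside the parentheses of the theorem's first formula.

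The number-theoretic normalization then emerges from the constraint $mi = m'j$ with both $m,m'$ even. Setting $d = \gcd(i,j)$, $i = d{\bf i}$, $j = d{\bf j}$ with $\gcd({\bf i},{\bf j}) = 1$, coprimality forces $m = \ell{\bf j}$ and $m' = \ell{\bf i}$; since $m,m'$ are both even while ${\bf i},{\bf j}$ cannot both be even, $\ell$ itself must be even, $\ell = 2\mu$. Relabeling $m = 2\mu$ reproduces the first displayed formula. The closed-form second expression then follows from the partial fraction
\begin{equation*}
\frac{1}{(m^2{\bf i}^2-1)(m^2{\bf j}^2-1)} \;=\; \frac{1}{{\bf i}^2-{\bf j}^2}\!\left(\frac{{\bf j}^2}{m^2{\bf j}^2-1} - \frac{{\bf i}^2}{m^2{\bf i}^2-1}\right),
\end{equation*}
together with the Mittag--Leffler identity $\pi z\cot(\pi z) = 1 - 2\sum_{n\geq 1}\frac{z^2}{n^2-z^2}$, applied at $z = 1/(2{\bf i})$ and $z = 1/(2{\bf j})$, followed by collecting terms.

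The step I expect to require the most care is the bookkeeping in the double Fourier sum: correctly enumerating the even-indexed pairs $(m,m')$ with $mi = m'j$ (which is precisely where the coprime quotients ${\bf i}, {\bf j}$ arise) and justifying term-by-term integration by absolute/uniform convergence of the Fourier series of this bounded continuous integrand. By comparison, the Riemann-limit step and the final cotangent rearrangement are routine. A sanity check at $i = j$ (where ${\bf i} = {\bf j} = 1$, the closed form must be interpreted as a limit and yields $\sqrt{p_{\max}} = 1$) can be used to catch indexing errors.
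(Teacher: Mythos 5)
Your proposal is correct, and its skeleton coincides with the paper's: Riemann-sum passage to $2\int_0^1|\sin(\pi i x)||\sin(\pi j x)|\,dx$, term-by-term integration of a Fourier expansion, reduction of the resonance condition via $\gcd(i,j)$, and the Mittag--Leffler cotangent identity (the paper's Lemma~\ref{l:cotangent}) for the closed form. The one genuine difference is the decomposition of the absolute value. The paper writes $|\sin 2\pi i x|=\sin(2\pi i x)\,s_i(x)$ with $s_i$ the sign square wave, expands $s_i$ in odd sine harmonics, and must then track a double sum over odd $p,q$ with eight indicator conditions $i(p\pm1)\pm j(q\pm1)=0$, of which the surviving four recombine into $4/\bigl((m^2\mathbf{i}^2-1)(m^2\mathbf{j}^2-1)\bigr)$. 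You instead expand $|\sin(\pi i x)|$ directly in its cosine series $\tfrac{2}{\pi}-\tfrac{4}{\pi}\sum_{m\ \mathrm{even}}\cos(m\pi i x)/(m^2-1)$, which in effect pre-multiplies $\sin\cdot s_i$ and collapses the bookkeeping to the single condition $mi=m'j$; the DC term gives the ``$2$'' and the DC-times-oscillatory cross terms vanish. The two routes are algebraically equivalent and land on identical series; yours is shorter and less error-prone (absolute and uniform convergence of the $O(1/m^2)$ cosine series cleanly justifies the interchange, which the paper leaves implicit), while the paper's square-wave formalism is the one that carries over with minimal change to the doubly-infinite case of Appendix~\ref{s:doubly_infinite_chains}, where both odd and even square waves $s_{i'}$, $c_{i'}$ are needed. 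Your coprimality argument ($m=\ell\mathbf{j}$, $m'=\ell\mathbf{i}$, and $\ell$ necessarily even since $\mathbf{i},\mathbf{j}$ cannot both be even) is exactly the paper's, and your $i=j$ sanity check correctly identifies the cotangent form as a removable $0/0$ whose series value is $1$.
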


\begin{proof}
The proof is in Appendix~\ref{s:semi_infinite_chains}.
\end{proof}

Lemma~\ref{l:pmaxii=1} provided some ``probability'' interpretations of 
$p_{\max}(i,j)$ for finite chains.  We show that the same interpretation
holds for infinite chains.

\begin{lemma} 
$p_{\mathrm{max}}^{\rightarrow}(i,i)=1$ and
$p_{\mathrm{max}}^{\rightarrow}(i,j)<1$ for $i \ne j$. 
\end{lemma}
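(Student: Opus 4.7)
The plan is to read both assertions directly from the series form of $\sqrt{p_{\max}^{\rightarrow}(i,j)}$ supplied by Theorem~\ref{t:semi_infinite_chains}, using only positivity and monotonicity of the summands in $\mathbf{i},\mathbf{j}$.

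First I would establish $p_{\max}^{\rightarrow}(i,i)=1$. Setting $i=j$ gives $\gcd(i,j)=i$ and hence $\mathbf{i}=\mathbf{j}=1$, so after relabeling $m=2k$ the series in Theorem~\ref{t:semi_infinite_chains} specializes to
\[
 \sqrt{p_{\max}^{\rightarrow}(i,i)} = \frac{4}{\pi^2}\left(2+4\sum_{k=1}^{\infty}\frac{1}{(4k^2-1)^2}\right).
\]
The task reduces to evaluating $S:=\sum_{k\geq 1}1/[(2k-1)^2(2k+1)^2]$. A standard partial-fraction decomposition gives
\[
 \frac{1}{(2k-1)^2(2k+1)^2}=\frac{1}{4}\left(\frac{1}{(2k-1)^2}+\frac{1}{(2k+1)^2}\right)-\frac{1}{4}\left(\frac{1}{2k-1}-\frac{1}{2k+1}\right).
\]
The first piece sums via Euler's identity $\sum_{k\geq 1}(2k-1)^{-2}=\pi^2/8$ to $\pi^2/16-1/4$; the second telescopes to $1/4$. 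Hence $S=(\pi^2-8)/16$, and substituting back yields $\sqrt{p_{\max}^{\rightarrow}(i,i)}=(4/\pi^2)\cdot(\pi^2/4)=1$.

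For the case $i\neq j$, the reduced integers $\mathbf{i},\mathbf{j}$ are coprime positive integers with $\mathbf{i}\neq\mathbf{j}$, so one of them is $\geq 2$. For each even $m\geq 2$ this forces the strict bound $(m^2\mathbf{i}^2-1)(m^2\mathbf{j}^2-1)>(m^2-1)^2$, and since every term of the series is positive, strictness survives termwise summation. Comparing with the $\mathbf{i}=\mathbf{j}=1$ series computed above then gives $\sqrt{p_{\max}^{\rightarrow}(i,j)}<\sqrt{p_{\max}^{\rightarrow}(i,i)}=1$, and hence $p_{\max}^{\rightarrow}(i,j)<1$.

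The only nontrivial step is the closed-form evaluation of the series $S$; once this is in hand, both assertions drop out of a one-line monotonicity comparison. An alternative route would be to deduce $p_{\max}^{\rightarrow}(i,i)=1$ by passing to the limit $N\to\infty$ in the finite-chain identity of Lemma~\ref{l:pmaxii=1}, but justifying the interchange of limit and infinite sum appears less clean than directly summing $S$.
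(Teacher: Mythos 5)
Your proposal is correct and follows essentially the same route as the paper: both evaluate the series of Theorem~\ref{t:semi_infinite_chains} at $\mathbf{i}=\mathbf{j}=1$ using Euler's identity $\sum_{k\geq1}(2k-1)^{-2}=\pi^2/8$ (your partial-fraction/telescoping computation of $S=(\pi^2-8)/16$ is just a slightly more explicit organization of the paper's evaluation $\sum_{m=2,4,\ldots}16/(m^2-1)^2=\pi^2-8$), and both then conclude $p_{\max}^{\rightarrow}(i,j)<1$ by termwise comparison with the $\mathbf{i}=\mathbf{j}=1$ series. Your write-up is in fact a bit more careful than the paper's in spelling out why the comparison is strict (at least one of the coprime integers $\mathbf{i},\mathbf{j}$ is $\geq 2$ when $i\neq j$), a step the paper asserts without comment.
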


\begin{proof}
For $i=j$, $\mathrm{gcd}(i,j)=i=j$, so that ${\bf i}={\bf j}=1$ and
it remains to show that
\begin{equation*} 
  \frac{4}{\pi^2}\left( 2 + \sum_{m=2}^{\infty} \frac{4}{(m^2-1)^2} \right)=1.
\end{equation*}
This can be derived as follows.  From the definition of the Riemann
$\zeta$ function, the following is easily verified:  
\begin{equation*}
  2 \sum_{m=2,4,...}^\infty \frac{1}{m^s}=2^{1-s}\zeta(s).
\end{equation*}
Observing that the left-hand side is
$2\left(\zeta(s)-\sum_{m=1,3,...}\frac{1}{m^s}\right)$, it follows that
\begin{equation*}
  \zeta(s)(1-2^{-s})=\sum_{\mu=1}^\infty\frac{1}{(2\mu-1)^s}.
\end{equation*}
Setting $s=2$ and remembering that $\zeta(2)=\pi^2/6$ (Euler formula) give
\begin{equation*}  
  \sum_{\mu=1}^\infty\frac{1}{(2\mu-1)^2}=\frac{\pi^2}{8}.
\end{equation*}
Therefore, 
\begin{equation*} 
  \sum_{m=2,4,...}\frac{16}{(m^2-1)^2}
   =4\left(\frac{\pi^2}{8}+\left(\frac{\pi^2}{8}-1\right)-1\right)=\pi^2-8.
\end{equation*}
The above and the infinite series representation yields 
\begin{equation*} 
  \sqrt{p_{\mathrm{max}}^{\rightarrow}(i,j)} 
< \frac{4}{\pi^2}\left( 2+ \sum_{m=2,4,...} \frac{4}{(m^2-1)^2}\right)=1.
\end{equation*}
\end{proof}

It is interesting to observe from the infinite series representation that \linebreak
$p_{\mathrm{max}}^{\rightarrow}(i,j)$ dips when $i$ and $j$ are
relatively prime.  In particular, relative to the anti-core $j=n+1$, the
deepest dips happen at $i=1$ and $i=2n+1$, since
$\mathrm{gcd}(1,n+1)=1$ and $\mathrm{gcd}(n+1,2n+1)=1$. 
The opposite phenomenon happens when $i$ and $j$ share prime factors. 
In this case, ${\bf i}$ and ${\bf j}$ drop, 
hence by the infinite series representation $p_{\mathrm{max}}(i,j)$ shoots up. 
This explains the ``ripples'' in the $\sqrt{p_{\mathrm{max}}(i,j)}$ plots of Fig.~\ref{f:dramatic_anti_core}. 
Even though this figure is the case of a finite length chain, 
the ``ripple'' phenomenon is well explained by the asymptotic formula. 

As a word of warning, the ``spikes'' near the anti-core of
Fig.~\ref{f:dramatic_anti_core} {\it should not} be misconstrued as
``cores.'' Indeed, the first spike occurs at $j=87$, so all it is
depicting is the trivial fact that $p_{\mathrm{max}}(87,87)=1$; this
implies, by mirror symmetry relative to the middle spin, that
$p_{\mathrm{max}}(87,115)=1$ as well.

\begin{corollary}
\label{c:semi_infinite_finite_diameter} The diameter of the
semi-infinite chain is finite and is achieved along a sequence $\{i_{k
\in \mathbb{N}}\}$ of prime numbers such that $\lim_{k \to \infty}
i_k=\infty$. 
\end{corollary}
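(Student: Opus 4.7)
The plan is to read off both bounds from the explicit infinite series representation
\[
\sqrt{p_{\mathrm{max}}^{\rightarrow}(i,j)}
= \frac{4}{\pi^2}\left( 2+\sum_{m=2,4,\ldots} \frac{4}{(m^2{\bf j}^2-1)(m^2{\bf i}^2-1)} \right)
\]
given in Theorem~\ref{t:semi_infinite_chains}. Since every summand is strictly positive for $m\geq 2$ and ${\bf i},{\bf j}\geq 1$, the series contributes a strictly positive amount, so
$\sqrt{p_{\mathrm{max}}^{\rightarrow}(i,j)} > \tfrac{8}{\pi^2}$ for all $i,j$. Taking $-\log$ and squaring, this yields
$d(i,j) < \log(\pi^4/64)$, which immediately gives the finiteness of the diameter.

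Next I would show that $\log(\pi^4/64)$ is actually the supremum. The natural route is an asymptotic-vanishing argument for the series: each term is dominated by $\frac{4}{(m^2-1)^2}$ uniformly in ${\bf i},{\bf j}$, and $\sum_{m=2,4,\ldots}\frac{4}{(m^2-1)^2}$ is convergent (indeed, we already evaluated such a sum in the proof of the previous lemma). Hence dominated convergence lets me pass the limit inside the sum: as $\min({\bf i},{\bf j})\to\infty$, each term tends to $0$, and therefore
\[
\lim \sqrt{p_{\mathrm{max}}^{\rightarrow}(i,j)} = \tfrac{8}{\pi^2}
\]
along any sequence of pairs $(i,j)$ with $\gcd(i,j)=1$ and $\min(i,j)\to\infty$.

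To realize this along prime indices, I would take $i_k$ to be the $k$th prime and any $j_k$ that is neither equal to $i_k$ nor a multiple of $i_k$ (the trivial choice $j_k=i_k-1$ works, or one may also let $j_k$ itself tend to infinity through primes distinct from $i_k$). Primality of $i_k$ forces $\gcd(i_k,j_k)=1$, so ${\bf i}=i_k\to\infty$, and the dominated-convergence step above gives $d(i_k,j_k)\to\log(\pi^4/64)$. Combined with the upper bound, this proves that the diameter equals $\log(\pi^4/64)$ and is achieved (in the supremum sense) along a sequence of primes tending to infinity.

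The only delicate step is the uniform domination of the tail of the series; once the bound $\frac{4}{(m^2{\bf j}^2-1)(m^2{\bf i}^2-1)}\leq\frac{4}{(m^2-1)^2}$ is in hand, the rest is routine. The word ``achieved'' should be understood as the supremum being approached in the limit, since strict positivity of the remainder series prevents the bound from being attained at any finite pair $(i,j)$.
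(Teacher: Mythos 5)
Your proof is correct and follows essentially the same route as the paper's: the lower bound $\sqrt{p_{\mathrm{max}}^{\rightarrow}(i,j)}\geq 8/\pi^2$ read off from the positivity of the series, followed by the observation that the series vanishes along a sequence of primes tending to infinity. You merely make explicit the dominated-convergence step and the choice of companion index $j_k$ that the paper leaves implicit, and you correctly flag that ``achieved'' must be read in the supremum sense.
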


\begin{proof}
From the infinite series representation, it is clear that 
$p^{\rightarrow}_{\mathrm{max}}(i,j)\geq 64/\pi^4$. 
Hence $\sup_{i\ne j}d^{\rightarrow}(i,j)\leq -\log(64/\pi^4)$. 
To show that this can be achieved, it suffices to observe
that the infinite series goes to $0$ along an $i$-sequence (or
$j$-sequence) of prime numbers.
\end{proof}

\subsection{Doubly-infinite chains}

In the doubly infinite chain case, the position of the spins is referenced to
$\omega$.  Hence, define $i'=i-\omega$ and $j'=j-\omega$. Furthermore,
$\ip=i'/\mathrm{gcd}(i',j')$ and $\jp=j'/\mathrm{gcd}(i',j')$.

\begin{theorem}
\label{t:doubly_infinite_chains_same}
Consider an homogeneous XX chain of odd length $N=2n+1$  
with the positions $i'$, $j'$ of the spins referenced relative to the center $n+1$. 
Assume $i'$ and $j'$ are positive. 

If both $i'$ and $j'$ are odd or both $i'$ and $j'$ are even with the
same power of $2$ in their prime number factorization, we have 
\begin{align*}
\sqrt{p_{\mathrm{max}}^\leftrightarrow (i',j')}
&= \frac{4}{\pi^2}\left( 2+\sum_{m=2,4,...}\frac{4}{(m^2\ip^2-1)(m^2 \jp^2-1)} \right)\\
&=\frac{8}{\pi^2}
\left(
\frac{1}{\ip^2-\jp^2}
\left( 
    \ip^2 \left(\frac{\pi}{2\ip}\right)  \cot \left(\frac {\pi}{2 \ip}\right)
  - \jp^2 \left(\frac{\pi}{2\jp}\right)  \cot \left(\frac {\pi}{2 \jp}\right)  
\right) 
\right).
\end{align*}

If $i'$ and $j'$ are even with different powers of $2$ in their prime
number factorization or $i'$ is odd and $j'$ is even,
\begin{align*}
\sqrt{p_{\mathrm{max}}^\leftrightarrow(i',j')}
&= \frac{4}{\pi^2} \left( 2+\sum_{m=4,8,...} \frac{4}{(m^2\ip^2-1)(m^2\jp^2-1)}\right) \\
&=\frac{8}{\pi^2} \left(\frac{\ip^2}{\ip^2-\jp^2} 
    \left(\frac{\pi}{4\ip}\right)  \cot\left( \frac{\pi}{4\ip}\right)
        - \frac{\jp^2}{\ip^2-\jp^2} \left( \frac{\pi}{4\jp}\right)  
                                   \cot\left( \frac{\pi}{4\jp}\right) \right).
\end{align*}
\end{theorem}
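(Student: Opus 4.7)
The plan is to pass from the finite-chain formula \eqref{e:explicitpmax} to an integral representation by letting $n\to\infty$ with $i=n+1+i'$ and $j=n+1+j'$ held fixed relative to the center, then expand the integrand in Fourier series and close the resulting series with a partial-fraction identity plus Euler's cotangent formula. First, using $\sin(\tfrac{\pi k}{2}+\phi) = (-1)^{k/2}\sin\phi$ when $k$ is even and $(-1)^{(k-1)/2}\cos\phi$ when $k$ is odd, the absolute values in \eqref{e:explicitpmax} separate the sum over $k\in\{1,\dots,2n+1\}$ into an odd-$k$ piece consisting of products of $|\cos|$'s and an even-$k$ piece consisting of products of $|\sin|$'s. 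Setting $x_k = k/(2(n+1))$ exhibits each piece as a Riemann sum over $(0,1)$ with mesh $1/(n+1)$, so by continuity and boundedness of the integrands the limit yields
\begin{equation*}
\sqrt{p_{\max}^{\leftrightarrow}(i',j')} = \int_0^1 |\cos(\pi i' x)||\cos(\pi j' x)|\,dx + \int_0^1 |\sin(\pi i' x)||\sin(\pi j' x)|\,dx.
\end{equation*}

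Next I would insert the standard Fourier series
\begin{equation*}
|\cos\theta| = \tfrac{2}{\pi} + \tfrac{4}{\pi}\sum_{m\ge1}\tfrac{(-1)^{m+1}}{4m^2-1}\cos(2m\theta),\qquad |\sin\theta| = \tfrac{2}{\pi} - \tfrac{4}{\pi}\sum_{m\ge1}\tfrac{1}{4m^2-1}\cos(2m\theta),
\end{equation*}
multiply each integrand, and integrate term by term. Orthogonality of $\{\cos(2\pi N x)\}_{N\ge 0}$ on $(0,1)$ kills all linear-in-$\cos$ terms and collapses each double sum to those $(m,n)$ satisfying the Diophantine relation $m i' = n j'$. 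Because the $|\cos|$-series contributes the sign $(-1)^{m+n}$ while the $|\sin|$-series contributes $+1$, the two integrals combine into
\begin{equation*}
\sqrt{p_{\max}^{\leftrightarrow}(i',j')} = \frac{8}{\pi^2} + \frac{8}{\pi^2}\sum_{\substack{m,n\ge 1\\ m i'=n j'}} \frac{1+(-1)^{m+n}}{(4m^2-1)(4n^2-1)}.
\end{equation*}

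I would then resolve the Diophantine relation through $\ip,\jp$: it forces $m = \ell\jp$, $n = \ell\ip$ for some $\ell\ge 1$, whence $m+n = \ell(\ip+\jp)$. Since $\gcd(\ip,\jp)=1$ forbids both being even, either $\ip$ and $\jp$ are both odd, which is equivalent to saying that $i'$ and $j'$ have the same $2$-adic valuation (precisely the hypothesis of the first case of the theorem), in which case $m+n$ is always even and every $\ell$ contributes; or exactly one of $\ip,\jp$ is even, making $\ip+\jp$ odd, in which case $m+n$ is even iff $\ell$ is even and only even $\ell$ contribute. Reindexing $m=2\ell$ in the first case sends $\ell\ge 1$ to $m\in\{2,4,\dots\}$, and in the second case the restriction $\ell$-even sends $m$ to $\{4,8,\dots\}$, yielding exactly the two infinite-series expressions stated.

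Finally, for the closed cotangent form I would use the partial fraction
\begin{equation*}
\frac{1}{(m^2\ip^2-1)(m^2\jp^2-1)} = \frac{1}{\ip^2-\jp^2}\left[\frac{1}{m^2-1/\jp^2} - \frac{1}{m^2-1/\ip^2}\right]
\end{equation*}
and evaluate subsums of the form $\sum_{m\in a\mathbb{N}}(m^2-z^2)^{-1} = \tfrac{1}{2z^2} - \tfrac{\pi\cot(\pi z/a)}{2az}$, which follow from Euler's identity $\pi\cot(\pi w) = \tfrac{1}{w} - 2w\sum_{m\ge1}(m^2-w^2)^{-1}$ under the rescaling $w=z/a$. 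Taking $a=2$ (first case) or $a=4$ (second) with $z\in\{1/\ip,1/\jp\}$, the $\tfrac{1}{2z^2}$ contributions combine to produce a constant that exactly cancels the leading $2$ inside the outer parentheses, leaving the advertised $\cot$ expression. The main obstacle I expect is the very first step: justifying the Riemann-sum limit rigorously—one must verify uniformly in $n$ that the odd- and even-$k$ partitions both fill $(0,1)$ with vanishing mesh and that the continuous bounded integrands allow passage to the limit before the Fourier expansion. Once that limit is in hand, the Fourier and partial-fraction steps are routine bookkeeping.
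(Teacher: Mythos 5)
Your proposal is correct, and while it follows the same overall skeleton as the paper's proof (finite sum $\to$ Riemann sum $\to$ integral over a period $\to$ Fourier expansion $\to$ Diophantine condition $\to$ cotangent closed form), the middle of your argument takes a genuinely different and considerably leaner route. Two differences stand out. First, by extracting the phase $\pi k/2$ from $i=n+1+i'$ and splitting on the parity of $k$ itself, you land directly on the parity-independent representation $\int_0^1|\cos(\pi i'x)||\cos(\pi j'x)|\,dx+\int_0^1|\sin(\pi i'x)||\sin(\pi j'x)|\,dx$; the paper instead shifts $k\mapsto k'=k-(n+1)$ first, which makes the trigonometric type of each factor depend on the parities of $i'+n$ and $j'+n$ (its Tables~\ref{t:f_and_g} and~\ref{t:in_terms_of}) and forces a separate argument that the limits along the even-$n$ and odd-$n$ subsequences coincide. (The two integral representations agree because a half-period shift of $x$ turns $|\sin(\pi i'x)|$ into $|\cos(\pi i'x)|$ exactly when $i'$ is odd, so your unmixed integrands and the paper's mixed $|\sin\cdot\cos|$ integrands in the odd/even case have equal integrals.) Second, you expand $|\sin|$ and $|\cos|$ directly in their own cosine Fourier series, so each integrand is a product of two single series and orthogonality leaves the single condition $mi'=nj'$ weighted by $1+(-1)^{m+n}$; the paper instead writes $|\sin(2\pi i'x)|=\sin(2\pi i'x)s_{i'}(x)$ with square-wave factors, producing quadruple trigonometric products, eight indicator conditions, and sign bookkeeping via $c(p,q)$, $d(p,q)$ and $(-1)^{(p-1)/2}$. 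Your resolution $m=\ell\jp$, $n=\ell\ip$ together with the observation that $\ip+\jp$ is even precisely when $i'$ and $j'$ have the same power of $2$ reproduces the theorem's two cases exactly, and your final step is the paper's Lemmas~\ref{l:cotangent} and~\ref{l:4_cotangent} in the rescaled form $\sum_{m\in a\mathbb{N}}(m^2-z^2)^{-1}$, with the constant terms cancelling the leading $8/\pi^2$ as you say. The Riemann-sum passage you flag as the main obstacle is no more delicate here than in the paper: both interleaved sub-sums have mesh $1/(n+1)$ and the integrands are Lipschitz, so the limit is immediate.
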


\begin{proof}
See Appendix~\ref{s:doubly_infinite_chains}. 
\end{proof}

\begin{theorem}
\label{t:doubly_infinite_chains_0} For a homogeneous XX chain of length
$N=2n+1$ with the positions $0$, $j'$ of the spins referenced relative
to the center $n+1$ and $j' > 0$
\begin{equation} 
    \sqrt{p_{\mathrm{max}}^{\leftrightarrow}(0,j')}=\frac{2}{\pi}\approx 0.636619. 
\end{equation}
\end{theorem}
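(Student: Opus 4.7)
The plan is to start from the explicit finite-chain formula \eqref{e:explicitpmax}, specialize $i = n+1$ (center) and $j = n+1+j'$, simplify using the special structure of $\sin(\pi k/2)$, and then pass to a Riemann sum limit as $n \to \infty$.

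First, I would substitute $i = n+1$ into \eqref{e:explicitpmax}, giving the factor $|\sin(\pi k/2)|$, which vanishes for even $k$ and equals $1$ for odd $k$. Writing odd indices as $k = 2\mu-1$ with $\mu = 1, \ldots, n+1$ yields
\begin{equation*}
\sqrt{p_{\mathrm{max}}(n+1,n+1+j')} = \frac{1}{n+1}\sum_{\mu=1}^{n+1} \left|\sin\frac{\pi(2\mu-1)(n+1+j')}{2(n+1)}\right|.
\end{equation*}
Splitting the argument as $(2\mu-1)\pi/2 + \pi(2\mu-1)j'/[2(n+1)]$ and using $|\sin((2\mu-1)\pi/2 + x)| = |\cos x|$ reduces this to
\begin{equation*}
\sqrt{p_{\mathrm{max}}(n+1,n+1+j')} = \frac{1}{n+1}\sum_{\mu=1}^{n+1} \left|\cos\frac{\pi(2\mu-1)j'}{2(n+1)}\right|.
\end{equation*}

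Second, I would recognize the right-hand side as the midpoint Riemann sum for the function $f(u) = |\cos(\pi j' u)|$ on $[0,1]$ with partition width $1/(n+1)$. Since $f$ is continuous (hence Riemann integrable), the limit as $n \to \infty$, which is by definition $\sqrt{p_{\mathrm{max}}^{\leftrightarrow}(0,j')}$, equals $\int_0^1 |\cos(\pi j' u)|\, du$.

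Third, I would evaluate the integral. The substitution $v = \pi j' u$ gives $\int_0^1 |\cos(\pi j' u)|\,du = (1/\pi j')\int_0^{\pi j'} |\cos v|\,dv$. Because $|\cos v|$ has period $\pi$ and integrates to $2$ over one period, and $j'$ is a positive integer, the last integral equals $2j'$. Hence $\sqrt{p_{\mathrm{max}}^{\leftrightarrow}(0,j')} = 2/\pi$, as claimed.

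I do not anticipate a serious obstacle: the only delicate point is confirming that the doubly-infinite limit is realized by letting $n \to \infty$ with $j'$ fixed (rather than by some other extension procedure), but this is exactly the definition implicit in Theorem~\ref{t:doubly_infinite_chains_same} and the surrounding discussion. The trigonometric simplification at the center, the clean reduction to a midpoint sum, and the fact that $j'$ is an integer (so the integral evaluates in closed form independently of $j'$) are what make the limit the constant $2/\pi$.
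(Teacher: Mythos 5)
Your proof is correct, and it takes a genuinely more direct route than the paper's. The paper's own proof is a one-line deferral to the appendix: one specializes the general integral representation \eqref{e:general} to $i'=0$, notes that of the pair $f,\bar{f}$ exactly one is a cosine (so $f(0)=1$ or $\bar{f}(0)=1$) while the other is a sine that vanishes at $0$, and is left with $\int_{-1/2}^{1/2}\lvert g(2\pi x j')\rvert\,dx = 2/\pi$, a rectified sinusoid integrated over an integer number of half-periods. You instead return to the finite-chain formula \eqref{e:explicitpmax}, let the factor $\lvert\sin(\pi k/2)\rvert$ kill the even-$k$ terms, convert the surviving odd-$k$ terms to $\lvert\cos(\pi(2\mu-1)j'/(2(n+1)))\rvert$, and pass to the limit as a midpoint Riemann sum for $\int_0^1\lvert\cos(\pi j' u)\rvert\,du = 2/\pi$. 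This is essentially the computation the paper itself carries out inside the proof of Theorem~\ref{t:anti_core} to get $\lim_{n}\sqrt{p_{\mathrm{max}}(i,\omega)}=2/\pi$ (there with $\lvert\sin(\pi i x)\rvert$ in place of your $\lvert\cos(\pi j'u)\rvert$), so it is fully consistent with the paper while being self-contained: it bypasses the $f,g,\bar{f},\bar{g}$ parity table and the even-$n$/odd-$n$ subsequence discussion entirely, since $\lvert\sin(\pi k(n+1)/(2(n+1)))\rvert=\lvert\sin(\pi k/2)\rvert$ independently of the parity of $n$. What the paper's route buys is uniformity---Theorem~\ref{t:doubly_infinite_chains_0} drops out of the same machinery that yields Theorem~\ref{t:doubly_infinite_chains_same}---whereas yours is a clean stand-alone treatment of the special case; your identification of the limit procedure ($n\to\infty$ with $j'$ fixed relative to the center) and the final evaluation using $\int_0^{\pi j'}\lvert\cos v\rvert\,dv = 2j'$ for integer $j'$ are both exactly right.
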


\begin{proof}
The result follows from the integral formulas of
Sec.~\ref{s:consistency} of Appendix~\ref{s:doubly_infinite_chains}.
\end{proof}

As a corollary of this theorem, we show that its asymptotic formula
predicts the magnitude of the dip of Figure~\ref{f:dramatic_anti_core}.
Observe the following:
\begin{align*}
&\sqrt{p_{\mathrm{max}}^{[1:201]}(87,101)} \\
&= \sqrt{p_{\mathrm{max}}^{[1:201]}(101,87)} \quad (\mbox{by symmetry of the }p_{\mathrm{max}}\mbox{ function}) \\
&=\sqrt{p_{\mathrm{max}}^{[1:201]}(101,115)} \quad (\mbox{by mirror symmetry of chain relative to center})\\
&\approx 0.63\quad (\mbox{by inspection of Fig.}~\ref{f:dramatic_anti_core}).
\end{align*}
Next, translating the finite chain to the doubly-infinite model, one would expect
\[ \sqrt{p_{\mathrm{max}}^{[1:201]}(101,115)} \approx \sqrt{p_{\mathrm{max}}^{\leftrightarrow}(0,14)}, \]
which given the above numerical observations holds remarkably accurately. 
Although $N<\infty$, the dip value of $\sqrt{p_{\mathrm{max}}^{[1:101]}(87,101)}$ is
consistent with the asymptotic value given by
Theorem~\ref{t:doubly_infinite_chains_0}.

Observe from Theorems~\ref{t:doubly_infinite_chains_same} and
\ref{t:doubly_infinite_chains_0} that the ``probability'' interpretation
of $p_{\mathrm{max}}^{\leftrightarrow}$ holds the same way as it did for
the semi-infinite chain. The details are left out.

\begin{corollary}
The diameter of the doubly-infinite chain is finite and is achieved for
$d^{\leftrightarrow}(0,j')=-2\log (2/\pi)$.
\end{corollary}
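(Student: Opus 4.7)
The plan is to combine Theorems~\ref{t:doubly_infinite_chains_same} and~\ref{t:doubly_infinite_chains_0}. Since $d^{\leftrightarrow}(i',j')=-\log p_{\max}^{\leftrightarrow}(i',j')$ is a decreasing function of the ITC, computing the diameter amounts to identifying $\inf_{i' \ne j'} p_{\max}^{\leftrightarrow}(i',j')$ and checking that it is attained. By the mirror symmetry of the chain about its center $\omega$, it suffices to restrict attention to $i',j' \geq 0$.

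Attainability is immediate: Theorem~\ref{t:doubly_infinite_chains_0} gives $\sqrt{p_{\max}^{\leftrightarrow}(0,j')}=2/\pi$ for every $j'>0$, so $d^{\leftrightarrow}(0,j') = -\log(4/\pi^2) = -2\log(2/\pi)$. The substance of the proof is then the matching upper bound: every pair with $i',j'\geq 1$ and $i'\ne j'$ satisfies $p_{\max}^{\leftrightarrow}(i',j')>4/\pi^2$. This is exactly where Theorem~\ref{t:doubly_infinite_chains_same} is applied: in both of its cases the ITC has the shape $\sqrt{p_{\max}^{\leftrightarrow}(i',j')} = (4/\pi^2)\bigl(2+S\bigr)$ where $S$ is a sum of terms $4/((m^2\ip^2-1)(m^2\jp^2-1))$ over positive even $m$ (starting at $m=2$ or $m=4$ depending on the parity case). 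Since $m\geq 2$ and $\ip,\jp\geq 1$ imply $m^2\ip^2-1\geq 3$ and $m^2\jp^2-1\geq 3$, every such term is strictly positive, so $S>0$ and hence $\sqrt{p_{\max}^{\leftrightarrow}(i',j')}>8/\pi^2$.

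Squaring gives $p_{\max}^{\leftrightarrow}(i',j')>64/\pi^4$, and the only numerical fact needed is $64/\pi^4>4/\pi^2$, equivalent to $\pi^2<16$. This yields $d^{\leftrightarrow}(i',j') < -2\log(2/\pi)$ strictly for all such pairs, so the supremum is $-2\log(2/\pi)$, realized along the family $(0,j')$ with $j'\ne 0$ (and symmetrically $(i',0)$), and in particular finite. I do not anticipate any genuine obstacle: the argument is a direct reading of the two preceding theorems combined with the elementary inequality $\pi<4$; the only bookkeeping concerns the mirror symmetry used to reduce the sign of $i',j'$ and the verification that the series remainder $S$ cannot accidentally vanish for allowed $\ip,\jp$.
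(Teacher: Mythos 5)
Your proposal is correct and follows essentially the same route as the paper: lower-bound $\sqrt{p_{\mathrm{max}}^{\leftrightarrow}(i',j')}$ by $8/\pi^2$ via the positivity of the series in Theorem~\ref{t:doubly_infinite_chains_same}, invoke Theorem~\ref{t:doubly_infinite_chains_0} for the pairs $(0,j')$, and conclude from $2/\pi < 8/\pi^2$. The only differences are cosmetic: the paper additionally remarks that the bound $8/\pi^2$ is asymptotically attained along a prime sequence (not needed for the corollary), and your reduction to $i',j'\ge 0$ ``by mirror symmetry'' silently skips the mixed-sign case $i'<0<j'$, which the appendix covers by replacing $i'$ with $|i'|$ in the same formulas.
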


\begin{proof}
It is easily seen from the infinite series representations of
$\sqrt{p_{\mathrm{max}}^{\leftrightarrow}(i',j')}$ in both cases of
Theorem~\ref{t:doubly_infinite_chains_same} that
$\sqrt{p_{\mathrm{max}}^{\leftrightarrow}(i'j')}\geq 8/\pi^2$, $\forall
i', j' \ne 0$.  This together with
Theorem~\ref{t:doubly_infinite_chains_0} implies that the diameter is
finite.  Furthermore, observe that the bound $8/\pi^2$ is reached along
an infinite sequence $\{i'_{k\in \mathbb{N}}\}$ of prime numbers such
that $\lim_{k \to \infty} i'_k =\infty$, which guarantees that
$\ip_{k}:=i'_k/\mathrm{gcd}(i'_k,j') \to \infty$ at infinity.  This
together with $2/\pi < 8/\pi^2$ implies that the diameter is $-2 \log
(2/\pi)$.
\end{proof}

The fact that the diameter is achieved for one spin at $i'=0$ reveals
the ``anti-core.''

\section{Anti-core}
\label{s:anti_core}

\subsection{Minimum probability}

Inspired from congestion phenomena in classical
communications~\cite{internet_mathematics}, it was numerically observed
in~\cite{first_with_sophie} that for chains of odd length $N=2n+1$ the
inertia of the quantum network relative to the spin $j$,
$I^{(\alpha)}(j):=\sum_{i=1}^N d^\alpha(i,j)$, $\alpha =2$, is maximal
for $j=\omega:=n+1$.  We now show that a stronger result holds:
\begin{equation*}
 \arg \max_j d(i,j) = \omega, \quad \forall i \ne \omega,
\end{equation*}
In other words, for each spin other than the center, the center is the farthest
away, which of course implies that $I^{(\alpha)}(j)$ is maximum for
$j=\omega$.  The preceding can be rephrased as
\begin{equation*}
  \arg \min_j p_{\mathrm{max}}(i,j) = \omega, \quad \forall i \ne \omega.
\end{equation*}
Given the explicit expression for $p_{\max}(i,j)$ in
(\ref{e:explicitpmax}), the claim that $\sqrt{p_{\mathrm{max}}(i,j)}$
is achieved for $j=n+1$ amounts to proving the following:
\begin{theorem}
\label{t:anti_core}
For XX chains of odd length $N=2n+1$, we have
\begin{equation}
\label{e:min_prob}
\frac{2}{N+1}
\sum_{k=1}^{2n+1} \left| \sin \frac{\pi k i }{2(n+1)} \sin\frac{\pi k j}{2(n+1)}\right|  \geq  
\frac{2}{N+1} \sum_{k=1}^{2n+1} \left| \sin \frac{\pi k i }{2(n+1)}\sin \frac{\pi k }{2}\right| 
\end{equation}
as $n \to \infty$.
\end{theorem}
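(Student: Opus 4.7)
The strategy is to reduce the finite inequality to the asymptotic formulas for doubly-infinite chains already established in Theorems~\ref{t:doubly_infinite_chains_same} and~\ref{t:doubly_infinite_chains_0}. The inequality is between two sums of products $|\sin\tfrac{\pi k i}{2(n+1)}\sin\tfrac{\pi k j}{2(n+1)}|$, so I first exploit the very special structure on the right, then lower-bound the left through the doubly-infinite formulas.

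First, I evaluate the right-hand side at $j=n+1$. Since $\sin\tfrac{\pi k(n+1)}{2(n+1)}=\sin\tfrac{\pi k}{2}$ vanishes for $k$ even and equals $\pm 1$ for $k$ odd, the RHS collapses to
\[
\frac{2}{N+1}\sum_{k\text{ odd}}\Bigl|\sin\tfrac{\pi k i}{2(n+1)}\Bigr|.
\]
Setting $i':=i-(n+1)\neq 0$, this is precisely the finite-$N$ analogue of $\sqrt{p_{\max}^{\leftrightarrow}(i',0)}$, and by Theorem~\ref{t:doubly_infinite_chains_0} (combined with the symmetry $p_{\max}(i,j)=p_{\max}(j,i)$) its $n\to\infty$ limit equals $2/\pi$.

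Second, I treat the LHS for $j\neq n+1$. After the shift $i'=i-(n+1)$, $j'=j-(n+1)$ (both nonzero), the LHS is the finite-chain discretization of $\sqrt{p_{\max}^{\leftrightarrow}(i',j')}$. The series representation in Theorem~\ref{t:doubly_infinite_chains_same}, in either the odd/odd or the mixed-parity case, is a sum of non-negative terms whose leading constant is $8/\pi^2$; dropping the tail yields
\[
\sqrt{p_{\max}^{\leftrightarrow}(i',j')}\ \geq\ \frac{8}{\pi^2}.
\]
Combining the two steps and noting that $\tfrac{8}{\pi^2}>\tfrac{2}{\pi}$ (equivalent to $\pi<4$), the LHS asymptotically dominates the RHS with a strict gap of at least $8/\pi^2 - 2/\pi$, which gives the desired inequality in the $n\to\infty$ limit.

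The main obstacle is justifying the convergence of the finite Riemann-type sums to the infinite-chain formulas \emph{uniformly} enough in $(i,j)$. The fixed-shift argument above works cleanly when $i'$ and $j'$ remain bounded as $n\to\infty$, but spins close to the chain boundary (i.e., $i$ or $j$ of order $o(n)$ relative to $1$ or $N$) require a separate comparison with the semi-infinite formula of Theorem~\ref{t:semi_infinite_chains}, for which one still has $\sqrt{p_{\max}^{\rightarrow}}\geq 8/\pi^2$. To handle all regimes at once, the cleanest route is to go back to the integral representations used in the appendices and apply dominated convergence, so that the lower bound $8/\pi^2$ on the LHS and the limit $2/\pi$ on the RHS are valid simultaneously over the full index range.
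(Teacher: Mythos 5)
Your proposal is correct and follows essentially the same route as the paper: compute the $n\to\infty$ limit of the right-hand side as $2/\pi$, lower-bound the left-hand side by the constant term $8/\pi^2$ of the nonnegative infinite-series representation, and conclude from $8/\pi^2 > 2/\pi$. The only cosmetic difference is that you invoke the doubly-infinite formulas (Theorems~\ref{t:doubly_infinite_chains_same} and~\ref{t:doubly_infinite_chains_0}) where the paper uses the semi-infinite series of Theorem~\ref{t:semi_infinite_chains} and a direct Riemann-sum evaluation of the right-hand side; both yield the identical bounds, and your remark on uniformity of the limit is a point the paper itself leaves implicit.
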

\begin{proof}
Firstly, we evaluate the asymptotic value of the right-hand side, that
is, the maximum excitation transition probability from spin $i$ to spin
$(n+1)=\omega$ (or from spin $\omega$ to spin $N$) for an infinite ($N
\to \infty$) chain with XX coupling.  From~\cite[Eq. (16)]{sophie_long}
or Table~\ref{t:eigenstructure}, we have
\[ \lim_{n \to \infty} \sqrt{p_{\mathrm{max}}(i,\omega)}=
\lim_{n\to \infty}\frac{1}{n+1} \sum_{k=1}^{2n+1} 
\left|\sin\frac{\pi k i}{2(n+1)}\sin \frac{\pi k}{2}\right|. \]
The even terms are zero and letting $k=2l+1$ we have
\[ \sqrt{p_{\mathrm{max}}(i,n+1)}= \frac{1}{n+1}\sum_{l=0}^{n} 
  \left| \sin \left( \frac{\pi i(2l+1)}{2(n+1)}\right)   \right|. \]
Setting $x=(2l+1)/(2(n+1))$, we have
\begin{align*}
\sqrt{p_{\mathrm{max}}(i,n+1)}
=\int_0^1 | \sin ( \pi i x) | dx 
= i \int_0^{1/i} \sin \pi i x dx
=\frac{2}{\pi}.
\end{align*}
Next, from the above and the infinite series representation of
Theorem~\ref{t:semi_infinite_chains}, it suffices to show that
\begin{equation*}
  \frac{4}{\pi^2}\left( 2+\sum_{m\in M} \frac{4}{(m^2{\bf j}^2-1)(m^2{\bf i}^2-1)}   \right)
  \geq \frac{2}{\pi}. 
\end{equation*}
Observe that 
\begin{equation*}
\frac{4}{\pi^2}\Big( 2+\sum_{m\in M} \frac{4}{(m^2{\bf j}^2-1)(m^2{\bf i}^2-1)}\Big)
\geq \frac{8}{\pi^2} > \frac{2}{\pi}
\end{equation*}
and the Theorem is proved. 
\end{proof}

Thus we have identified spin $\omega=n+1$ as having minimal probability
of any excitation being transferred to or from it.  To put it another
way, the spin $\omega$ is maximally distant from all other spins.  We
shall call the corresponding probability amplitude ``$\omega$-small''
and the corresponding distance ``$\omega$-large.''  The
``$\omega$-small'' property is illustrated in Fig.~\ref{f:dramatic_anti_core}.

\begin{figure}
\begin{center}
\includegraphics[width=0.6\textwidth]{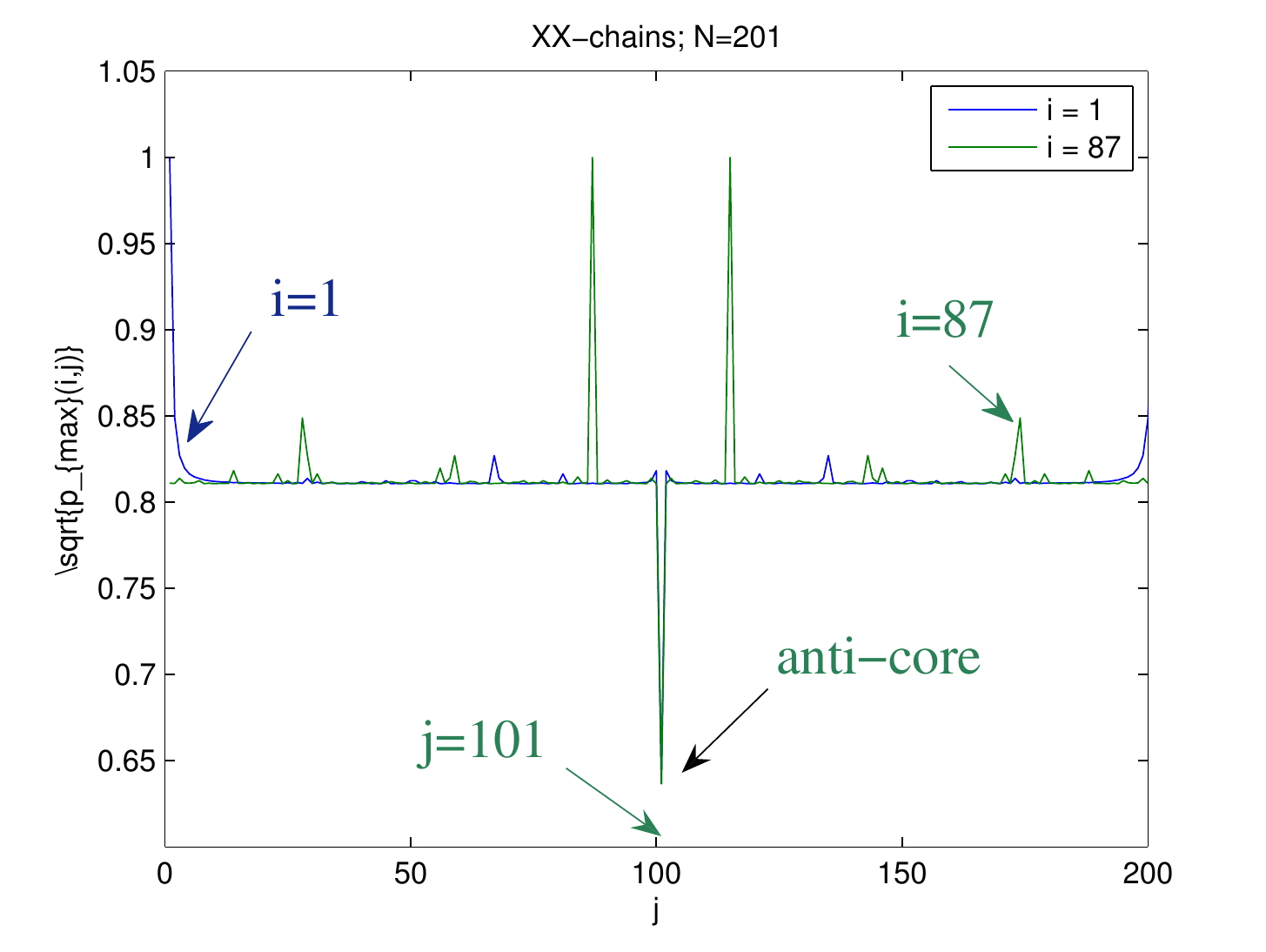}
\end{center}
\caption{Square root of maximum transition probability between spins
$i=1,87$ and all $j$-spins for an XX chain with $N=201$.  The sharp drop
at $j=101$ illustrates the ``$\omega$-small'' property. 
The two spikes near the anti-core are {\it not} congestion cores, as observed in Sec.~\ref{s:infinite_chains}. }  \label{f:dramatic_anti_core}
\end{figure}

\subsection{Transport properties}

Here we examine the transport properties of the center $\omega=n+1$ and
justify its ``anti-core'' properties.  To this end, we consider the path
integral representation.  Starting with
\begin{equation*}
   \<i|e^{-\imath H_1 \tau}|k\> = 
    \sum_{\ell=1}^N \<i|e^{-\imath H_1 s} |\ell\>\<\ell| e^{-\imath H_1
    (\tau - s)}|k\>, 
\end{equation*}
we obtain
\begin{equation*}
  \<i|e^{-\imath H_1 t}|j\> 
  =  \sum_{k,\ell=1}^N \<i|e^{-\imath H_1 s} |\ell\> 
         \<\ell| e^{-\imath H_1 (\tau - s)}|k\>\<k |e^{-\imath H_1(t-\tau)}|j\>.
\end{equation*}
By iterating, we get
\begin{align*}
  \<i|e^{-\imath H_1 t}|j\> 
 &=  \sum_{k_1,k_2,\ldots,k_{n-2}=1}^N \<i|e^{-\imath H_1 t_1} |k_1\>
    \<k_1| e^{-\imath H_1 t_2 )}|k_2\>\ldots\<k_{n-2} |e^{-\imath H_1 t_{n-1}}|j\> \\
 &= \sum_{k_1,k_2,\ldots,k_{n-2}=1}^N \prod_{i=1,\ldots,n-1}\<k_{i-1}|e^{-\imath H_1 t_i}|k_i\>
\end{align*}
with $k_0=i$, $k_{n-1}=j$, $t_1+t_2+\ldots+t_{n-1}=t$, and $n \leq N$. It follows that 
\begin{align*}
\sqrt{p_t(i,j)} 
&\leq \sum_{k_1,k_2,\ldots,k_{n-2}=1}^N \prod_{i=1,\ldots,n-1}\sqrt{p_{t_i}(k_{i-1},k_i)} \\
&\leq \sum_{k_1,k_2,\ldots,k_{n-2}=1}^N \prod_{i=1,\ldots,n-1}\sqrt{p_{\mathrm{max}}(k_{i-1},k_i)}.
\end{align*}
Since the above is valid for all $t$'s, we get
\begin{equation}  
 \sqrt{p_{\mathrm{max}}(i,j)} 
 \leq \sum_{k_1,k_2,\ldots,k_{n-2}=1}^N
 \prod_{i=1,\ldots,n-1}\sqrt{p_{\mathrm{max}}(k_{i-1},k_i)}.  
\end{equation}
The above means that an excitation from the source $i$ to the
destination spin $j$ takes all possible length-$n$ paths from $i$ to
$j$, including those paths transiting through $\frac{N+1}{2}=\omega$.
For those paths, any term of the form $p_{\mathrm{max}}(\omega,k_i)$ or
$p_{\mathrm{max}}(k_{i-1},\omega)$ is $\omega$-small, making the norm of
the product in the right-hand side $\omega$-small.  Thus, for any
transfer of excitation from $i$ to $j$, the probability of exciting
$\omega$ along the way is $\omega$-small.  If we consider the
probability of excitation of $\omega$ as its ``congestion,'' then
$\omega$ remains clear of congestion, for transfer from any source $i\ne
\omega$ to any destination $j\ne \omega$.  Thus $\omega$ appears to be
the anti-thesis of the concept of core; let us agree to call it {\it ``anti-core.''}

\section{Anti-core in engineered chains}
\label{s:engineered}

As observed earlier, the diameter of a homogeneous XX chain remains
finite even as the length of the chain goes to infinity.  We now examine
whether we can modify the chain to increase its diameter to infinity.  
One way to achieve this is to apply a local potential $\zeta$ to the
central spin $\omega$.  This has the effect of perturbing the single
excitation Hamiltonian, in turn distorting the original homogeneous
distance to $d_\zeta$, 
so that in the limit $\zeta \to\infty$ the ITC diameter increases {\it ad infinitum}, 
hence getting close to the coarse geometry paradigm of
dealing with objects of infinite size. The anti-core phenomenon is
amplified in the sense that $\lim_{\zeta\to\infty}
\sqrt{p_{\mathrm{max}}(1,\omega)} =0$.  This provides a tunneling
barrier interpretation of the anti-core.

We prove that the $d_{\zeta \to \infty}$ diameter of the engineered
chain goes to $\infty$ under two different scenarios: $N<\infty$ and
$N=\infty$. The $N<\infty$ proof is in the spirit of the main body of
the paper; the $N=\infty$ proof is operator-theoretic and relies on the
assumption that $H_1$ is a doubly-infinite matrix, hence eradicating the
``border effects.''

\subsection{Finite Chains}

With the applied potential the Hamiltonian in the single excitation
subspace becomes
\begin{equation} 
H_1^{(\zeta)}= \begin{pmatrix}
0 & 1 & \ldots  & 0 & 0 & 0 & \ldots & 0 & 0 \\
1 & 0 & \ldots  & 0 & 0 & 0 & \ldots & 0 & 0 \\
\vdots & \vdots & \ddots & \vdots & \vdots & \vdots   &    & \vdots   &\vdots     \\
0 & 0 & \ldots  & 0 &  1    & 0 & \ldots  & 0 & 0  \\
0 & 0 &\ldots   & 1 & \zeta & 1 & \ldots  & 0 & 0  \\
0 & 0 & \ldots  & 0 & 1 & 0  & \ldots & 0 & 0\\
\vdots & \vdots &   & \vdots &   \vdots  & \vdots     & \ddots  & \vdots   & \vdots \\
0 & 0 & \ldots  & 0 &   0    & 0  & \ldots & 0        & 1          \\
0 & 0 & \ldots  & 0 & 0      & 0  & \ldots & 1        & 0
\end{pmatrix}.
\label{e:perturbed_H_1}
\end{equation}
The eigenstructure of this new Hamiltonian yields the new ITC distance
$d_\zeta$. 

\begin{theorem}
For an XX chain of odd length $N<\infty$, $\lim_{\zeta \to \infty}
d_\zeta(1,\omega)=\infty$.
\end{theorem}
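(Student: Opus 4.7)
The plan is to reformulate $d_\zeta(1,\omega)=-\log p_{\max}^{(\zeta)}(1,\omega)$ so that the claim becomes $p_{\max}^{(\zeta)}(1,\omega)\to 0$ as $\zeta\to\infty$. Writing $H_1^{(\zeta)}=T_N+\zeta\,|\omega\>\<\omega|$ exhibits it as a positive rank-one perturbation of the homogeneous Toeplitz Hamiltonian. Choosing real orthonormal eigenvectors $\{v_k^{(\zeta)}\}_{k=1}^N$ with eigenvalues $\mu_1^{(\zeta)}\leq\cdots\leq\mu_N^{(\zeta)}$, the defining formula $(\ref{e:pmax})$ gives
\begin{equation*}
\sqrt{p_{\max}^{(\zeta)}(1,\omega)}=\sum_{k=1}^N|\<1|v_k^{(\zeta)}\>|\,|\<v_k^{(\zeta)}|\omega\>|,
\end{equation*}
and since $N$ is fixed, it suffices to show that each of the $N$ summands tends to $0$.

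The key structural step will be to split the spectrum into one ``escaping'' eigenpair and $N-1$ ``bounded'' ones. Weyl's inequalities applied to the rank-one positive perturbation give $\mu_k^{(\zeta)}\in[\lambda_1(T_N),\lambda_N(T_N)]\subseteq[-2,2]$ for $k\leq N-1$, while $\mu_N^{(\zeta)}\in[\zeta-2,\zeta+2]$. For a bounded eigenvector ($k\leq N-1$), I would extract the $\omega$-th row of the eigenvalue equation $(H_1^{(\zeta)}-\mu_k^{(\zeta)})v_k^{(\zeta)}=0$, namely
\begin{equation*}
(\zeta-\mu_k^{(\zeta)})\,\<\omega|v_k^{(\zeta)}\>=-\<\omega-1|v_k^{(\zeta)}\>-\<\omega+1|v_k^{(\zeta)}\>,
\end{equation*}
which forces $|\<\omega|v_k^{(\zeta)}\>|\leq 2/(\zeta-2)=O(1/\zeta)$ and kills the $k$-th summand. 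For the escaping eigenvector $v_N^{(\zeta)}$, let $P=I-|\omega\>\<\omega|$ and project the eigenvalue equation onto the range of $P$ to obtain the Schur-complement identity
\begin{equation*}
Pv_N^{(\zeta)}=-(PT_NP-\mu_N^{(\zeta)})^{-1}(|\omega-1\>+|\omega+1\>)\,\<\omega|v_N^{(\zeta)}\>.
\end{equation*}
Since $\|PT_NP\|\leq 2$ while $\mu_N^{(\zeta)}\geq\zeta-2$, the resolvent has norm $O(1/\zeta)$, so $\|Pv_N^{(\zeta)}\|=O(1/\zeta)$. Because $\<1|\omega\>=0$ (as $\omega\neq 1$), $|\<1|v_N^{(\zeta)}\>|=|\<1|Pv_N^{(\zeta)}\>|\leq\|Pv_N^{(\zeta)}\|=O(1/\zeta)$, killing the last summand. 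Assembling, $\sqrt{p_{\max}^{(\zeta)}(1,\omega)}=O(1/\zeta)$, and hence $d_\zeta(1,\omega)\geq 2\log\zeta-O(1)\to\infty$.

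The main obstacle is the clean separation of the escaping eigenvalue from the bounded bulk, but Cauchy interlacing (equivalently the Weyl inequalities above) makes this immediate because the perturbation has rank one. Once the split is in place, no delicate cancellation or number-theoretic information---unlike the infinite-chain results of Sec.~\ref{s:infinite_chains}---is needed: the two elementary calculations (the $\omega$-row of the eigenvalue equation and the reduced-resolvent identity) each deliver $O(1/\zeta)$ decay of the two types of overlap factor, and summing $N$ such contributions gives the full estimate.
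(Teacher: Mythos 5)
Your proof is correct, and it takes a genuinely different route from the paper's. The paper starts from the same decomposition $H_1^{(\zeta)}=T_N+\zeta E$, but then expands the characteristic polynomial as $\det(\lambda I-T_N)-\zeta\bigl(\det\bigl(\lambda I-T_{(N-1)/2}\bigr)\bigr)^2$ and invokes root-locus asymptotics: exactly one eigenvalue escapes like $\zeta$, the eigenvector equations split in the limit into $Ev_N=v_N$ and $Ev_k=0$, whence $v_N\to|\omega\>$ and $\<\omega|v_k\>\to 0$, and the limiting $p_{\mathrm{max}}(1,\omega)$ is read off as $0$. You obtain the same spectral split by Cauchy interlacing for a positive rank-one perturbation, and then replace the limiting-eigenvector argument with two quantitative estimates: the $\omega$-th row of the eigenvalue equation for the $N-1$ bulk eigenvectors, and the reduced resolvent $(PT_NP-\mu_N^{(\zeta)})^{-1}$ for the escaping one. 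Your version buys three things: it is fully rigorous where the paper's root-locus step is informal; it sidesteps the question of convergence of individual eigenvectors (the paper's limiting bulk eigenvalues are the doubly degenerate roots of $\bigl(\det\bigl(\lambda I-T_{(N-1)/2}\bigr)\bigr)^2$, so eigenvector convergence would need care, although only $\<\omega|v_k\>\to 0$ is actually used); and it yields the explicit rate $\sqrt{p_{\mathrm{max}}^{(\zeta)}(1,\omega)}=O(1/\zeta)$, hence $d_\zeta(1,\omega)\geq 2\log\zeta-O(1)$, consistent with the logarithmic growth $d_\zeta(1,\omega)=\Theta(\log\zeta)$ exhibited by the paper's symbolic $N=3$ example. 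What the paper's route buys in exchange is the identification of the limiting bulk spectrum as that of two decoupled half-chains $T_{(N-1)/2}$, which supports the physical ``decoupling'' interpretation in the conclusions but is not needed for the theorem itself.
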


\begin{proof}
To emphasize the dependency on the number $N$ of spins, observe that
\begin{equation*} 
 H_1^{(\zeta)}=T_N+\zeta E, 
\end{equation*}
where $E$ is the $N \times N$ matrix made up of $0$'s everywhere except
for a $1$ in position $\left(\frac{N+1}{2},\frac{N+1}{2}\right)$; and
$T_N$ is the $N \times N$ finite Toeplitz matrix made up of $0$'s on the
diagonal, $1$'s on the super-diagonal, $1$'s on the sub-diagonal, and
$0$'s everywhere else.  Recall that the determinant of the sum of two
matrices equals the sums of the determinants of all matrices made up
with some columns of one matrix and the complementary columns of the
other matrix.  Applying the latter to $\det((\lambda I_N -T_N)-\zeta
E))$ yields the characteristic polynomial
\begin{equation*}
  \det(\lambda I - T_N) -\zeta \left(\det\left(\lambda I-T_{\frac{N-1}{2}}\right)\right)^2. 
\end{equation*}
From classical root-locus techniques, it follows that, as $\zeta \to
\infty$, {\it exactly one} eigenvalue $\lambda_N$ goes to $\infty$,
while the $(N-1)$ remaining ones converge to the roots of
$\left(\det\left(\lambda I-T_{\frac{N-1}{2}}\right)\right)^2=0$.  The
eigenvector equations $(T_N+\zeta E)v_k=\lambda_k(\zeta) v_k$ split,
asymptotically as $\zeta \to \infty$, into two subsets: one for
$\lambda_N \to \infty$ and the others for $\frac{\lambda_k}{\zeta}\to
0$; that is, resp.,
\begin{align*}
 Ev_N&= \left( \lim_{\zeta \to \infty} \frac{\lambda_N(\zeta)}{\zeta}\right) v_N,\\ 
 Ev_k&= 0, \quad k\ne N.
\end{align*}
Next, again from root-locus techniques, it follows that 
\begin{equation*} 
  \lim_{\zeta \to \infty} \frac{\lambda_N(\zeta)}{\zeta}=1, 
\end{equation*}
so that $v_{N,k}=0$, $k \ne \omega$ and $v_{N,\omega}=1$. On the other hand, 
it is obvious that $v_{k,\omega}=0$, $k\ne N$. 
Therefore
\begin{equation*} 
  p_{\mathrm{max}}(1,\omega)=\sum_{k=1}^N|\<1|v_k\>\<v_k|\omega\>|=0 
\end{equation*}
and $\lim_{\zeta \to \infty}d_{\zeta}(1,\omega)=\infty$. 
\end{proof}

\subsection{Example}

To illustrate several important points, we consider a very simple example, 
which has the advantage of being analytically tractable. 
Consider the Hamiltonian~\eqref{e:perturbed_H_1} for the $N=3$ case. 
The diagonal matrix of eigenvalues of $H_1$ can be computed symbolically as
\[\begin{pmatrix}
\zeta/2 - (z^2 + 8)^{1/2}/2  &                       0 & 0\\
                       0& 0 & 0\\
                       0&                       0 & \zeta/2 + (\zeta^2 + 8)^{1/2}/2
\end{pmatrix}.\]
The normalized eigenvectors are computed as 
\[v_1=
\begin{pmatrix}
                         1/((\zeta/2 - (\zeta^2 + 8)^{1/2}/2)^2 + 2)^{1/2}\\
 (\zeta/2 - (\zeta^2 + 8)^{1/2}/2)/((\zeta/2 - (\zeta^2 + 8)^{1/2}/2)^2 + 2)^{1/2}\\
                         1/((\zeta/2 - (\zeta^2 + 8)^{1/2}/2)^2 + 2)^{1/2}
\end{pmatrix},\]
\[ v_2=\begin{pmatrix}
-2^{1/2}/2\\
          0\\
  2^{1/2}/2
\end{pmatrix},\]
\[v_3=\begin{pmatrix}
 1/((\zeta/2 + (\zeta^2 + 8)^{1/2}/2)^2 + 2)^{1/2}\\
 (\zeta/2 + (\zeta^2 + 8)^{1/2}/2)/((\zeta/2 + (\zeta^2 + 8)^{1/2}/2)^2 + 2)^{1/2}\\
                         1/((\zeta/2 + (\zeta^2 + 8)^{1/2}/2)^2 + 2)^{1/2}
\end{pmatrix}.
\]

Using those eigenvectors to symbolically compute $p_{\mathrm{max}}(1,2)$ yields
\begin{align*} 
&p_{\mathrm{max}}(1,2)=\\
&\left(\left|(\zeta/2 - (\zeta^2 + 8)^{1/2}/2)/((\zeta/2 - (\zeta^2 + 8)^{1/2}/2)^2 + 2\right|\right.\\ 
+&~~\left.\left|(\zeta/2 + (\zeta^2 + 8)^{1/2}/2)/((\zeta/2 + (\zeta^2 + 8)^{1/2}/2)^2 + 2\right|\right)^2
\sim 4 \zeta.
\end{align*}
After symbolic computation of $p_{\mathrm{max}}(1,3)$ and symbolically simplifying the expression, 
it is observed that $p_{\mathrm{max}}(1,3)=1$. The results are translated into distances and plotted in
Fig.~\ref{f:chain123}, left. 

\begin{figure}
\centerline{
\subfigure{\scalebox{0.4}{\includegraphics{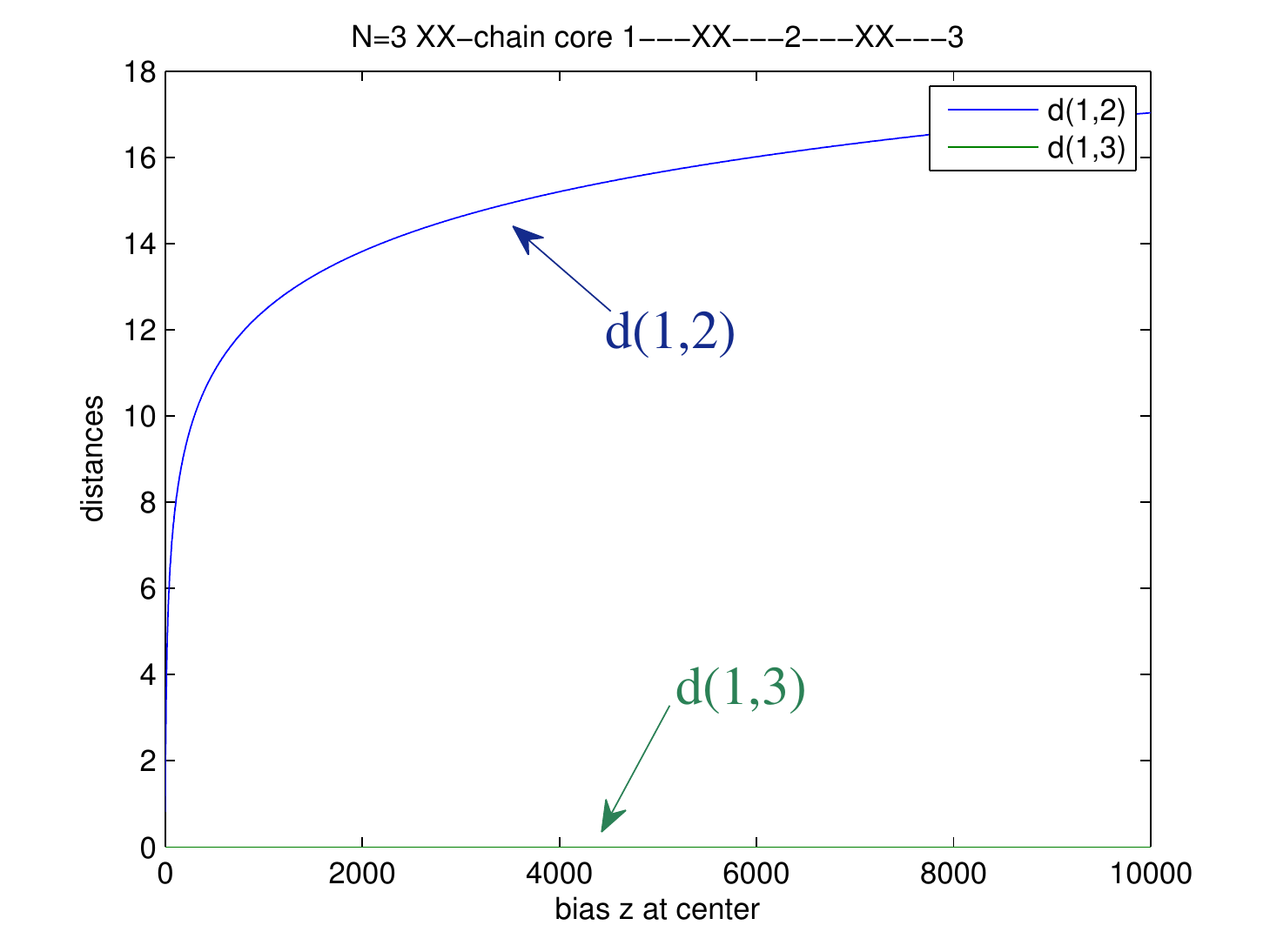}}}
\subfigure{\scalebox{0.4}{\includegraphics{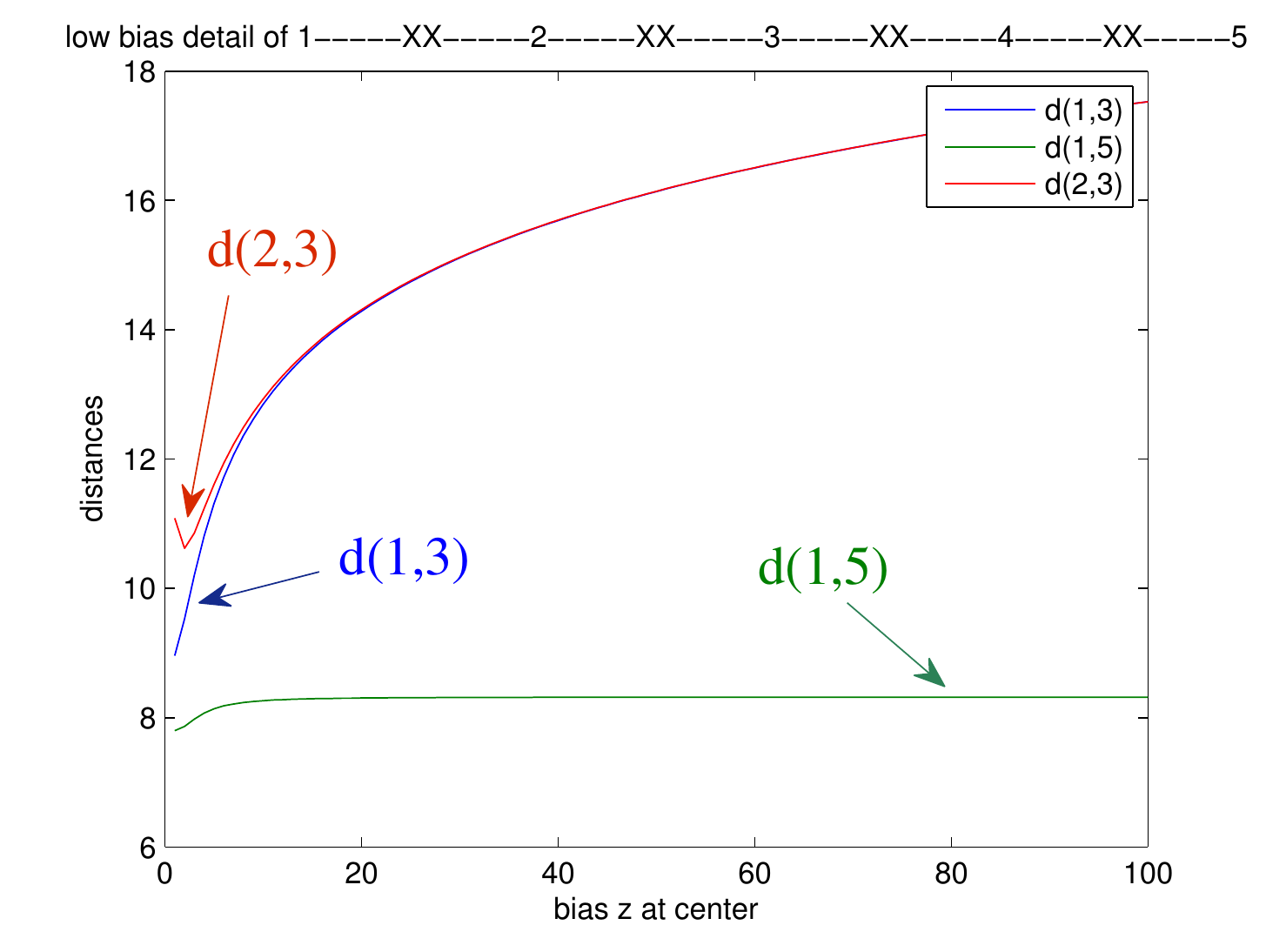}}}}
\vskip-.5cm
\caption{Simple 3-spin (left) and 5-spin (right) chain examples with bias $\zeta$ at center 
showing logarithmic behavior of the distance from the center to the outer spins  
and vanishing (left) and bounded (right) distance between the outer spins.}
\label{f:chain123}
\end{figure}

There are several important observations to be made from this simple example: 
\begin{itemize}
\item The logarithmic behavior of the distance between the anti-core $2$
and the outer spin, $d(1,2)=\Theta(\log(\zeta))$, {\it is confirmed
analytically} from the symbolic expression of the eigenvectors. The same
applies to $d(1,3)=0$.

\item Because $d(1,3)=0$ and $d(2,1),d(2,3)\to \infty$ as $\zeta \to
\infty$, the geodesic triangle $123$ degenerates to the ray
$[2,1]=[2,3]$.  (A {\it ray} is an isometric embedding of $[0,\infty)$
to a metric space~\cite[III.H.3]{BridsonHaefliger1999}, intuitively
meaning that a ray starts at a finite point and extends to infinity
along a length minimizing path.)  As such, because $\triangle 123$ is
``flat,'' it is a Gromov $\delta$-slim triangle.  (A triangle is
$\delta$-slim~\cite[III.H.1]{BridsonHaefliger1999} if any edge is
contained in the union of the $\delta$-neighborhoods of the other two
edges.)  Thus the $123$ chain is a, albeit trivial, Gromov hyperbolic
space.  (A metric space is Gromov
hyperbolic~\cite[III.H.1]{BridsonHaefliger1999} if there exists a
$\delta < \infty$ such that all of its geodesic triangles are
$\delta$-slim.)

\item The rays $[2,1]$ and $[2,3]$ are going to infinity while keeping
their Hausdorff distance finite, in fact vanishing.  So they converge to
the same point on the Gromov boundary.  (The Gromov
boundary~\cite[III.H.3]{BridsonHaefliger1999} is the equivalence class
of rays keeping their distance finite.)  Thus the $123$ chain has its
Gromov boundary reduced to a singleton.

\item The preceding fact (Gromov boundary reduced to a singleton) is the
major topological discrepancy between classical and quantum
networks. Most classical networks have at least two points in their
Gromov boundary, creating a core~\cite{baryshnikov_tucci} as opposed to
the anti-core of quantum communications.
\end{itemize}

The preceding 3-spin example has been extended to the same 5-spin case
(Fig.~\ref{f:chain123}, right) with all results {\it proved} by symbolic
manipulations, which unfortunately become prohibitively long to be
included here.  The results are the same, except that the distance
between the outer spins remains finite (rather than vanishing), but this
suffices to come to the conclusion that the chain has only one point in
its Gromov boundary. We conjecture that this is a general feature.

\subsection{Infinite Chains}

A proof of the infinite diameter property of the engineered chain with
bias $\zeta$ at the center $\omega$ can be developed under infinite
number of spins hypothesis, $N \to \infty$.  Indeed, in this case, $H_1$
becomes the doubly infinite Toeplitz (also referred to as Laurent or
multiplication) operator $T_s$ with symbol
$s(\exp(i\theta))=\exp(i\theta)+\exp(-i\theta)$, and $H_1^{(\zeta)}$ is
a compact perturbation $\zeta E$ of $T_s$, where $E=|e\rangle\langle e|$
with $e=(\ldots0,0,1,0,0\ldots)'$ the unit basis vector of the Hilbert
space $\ell^2(-\infty,+\infty)$ of square summable doubly infinite
sequences. By a well known perturbation theory result, the spectrum of
$H_1^{(\zeta)}:\ell^2(-\infty,+\infty) \to \ell^2(-\infty,+\infty)$
consists of the interval $s([0,2\pi))=[-2,+2]$ plus another eigenvalue
of finite multiplicity that converges asymptotically to $\zeta$. The
eigen-equation $H_1^{(\zeta)}v_\lambda=\lambda v_\lambda$ becomes
\[  (T_s-\lambda I)v_\lambda=- \zeta Ev_\lambda=-\zeta v_{\lambda,\omega} e,\] 
where $v_{\lambda,\omega}$ denotes the central component of the
eigenvector $v_\lambda$.  In the Fourier or $z$-domain, $T_s-\lambda I$
is just the multiplication by $(z+z^{-1}-\lambda)$ operator. Therefore,
the above can be resolved as
\[\hat{v}_\lambda = \frac{-\zeta v_{\lambda,\omega}}{z+z^{-1}-\lambda}, \]
where $\hat{v}_\lambda$ denotes the Fourier or $z$-transform of
$v_\lambda$.  Taking $\lambda \in (-2,2)$, it is easily verified that
the Laurent expansion of the above converges on the unit circle; hence
the inverse Fourier or $z$-transform is in $\ell^2(-\infty,+\infty)$.
This provides an example of the rather unusual circumstance under which
a {\it continuous} spectrum (here $[-2,+2]$ of $T_s$) is converted in to
{\it pure point} spectrum (here $(-2,+2)$ of $T_s+\zeta E$) by a compact
perturbation.

The formula for the probability becomes
\[ p_{\mathrm{max}}(i,j)
  =\int_{-2}^{+2} |\langle v_\lambda | i \rangle \langle j | v_\lambda
  \rangle | d \lambda + |\langle v_\zeta | i \rangle   \langle j | v_\zeta \rangle |,
\]
where $v_\zeta$ is the eigenvector corresponding to the asymptotic
eigenvalue $\zeta$. Here, we are specifically interested in the case
$p_{\mathrm{max}}(\omega,\infty)$ of the probability of transition from
the center to infinity. Take $\lambda \in (-2,+2)$. We need to evaluate
$\langle v_\lambda | \omega\rangle$ and $\langle v_\lambda |
\infty\rangle$. Recall that Parseval's theorem
$\sum_{m=-\infty}^{+\infty}a_mb_m=\frac{1}{2\pi i} \oint \hat{a}(z)
\hat{b}(z^{-1}) \frac{dz}{z}$ allows us to compute an inner product by
residue calculation. Using the recipe yields
\begin{align*}
\langle v_\lambda | \omega \rangle
&= \frac{1}{2\pi \imath}\oint 1 \frac{-\zeta v_{\lambda,\omega}}{z+z^{-1}-\lambda} \frac{dz}{z}
= \frac{1}{2\pi \imath}\oint 1 \frac{-\zeta v_{\lambda,\omega}}{z^2-\lambda z +1} dz\\
&=\mathrm{Residue}\left( \frac{-\zeta v_{\lambda,\omega}}{z^2-\lambda z +1}\right)_{p,\bar{p}},
\end{align*} 
where $p,\bar{p}$, with $|p|<1, |\bar{p}|<1$, are the poles of the
integrand, that is, the zeros of $z^2-\lambda z + 1$. This yields
\begin{equation*}  
  \langle v_\lambda | \omega \rangle 
  = \frac{-\zeta v_{\lambda,\omega}}{p-\bar{p}} + \frac{-\zeta v_{\lambda,\omega}}{\bar{p}-p} 
  =0.  
\end{equation*}
Next, we look at the term $\langle v_\lambda |\infty\rangle$ as the
limit of $\langle v_\lambda |M\rangle$ as $M \to \infty$. We have
\begin{align*}
\langle v_\lambda |M\rangle 
&= \frac{1}{2\pi \imath}\oint \frac{-\zeta v_{\lambda,\omega}}{z+z^{-1}-\lambda}z^M \frac{dz}{z}
 = \mathrm{Residue}\left( \frac{-\zeta v_{\lambda,\omega} z^M}{z^2-\lambda z +1} \right)_{p,\bar{p}}\\
&= \frac{-\zeta v_{\lambda,\omega}p^M}{p-\bar{p}} + \frac{-\zeta v_{\lambda,\omega}\bar{p}^M}{\bar{p}-p}.
\end{align*}
Since $|p|<1$, the limit of the above as $M\to \infty$ vanishes. Last,
we look at the isolated eigenvalue case, $|\langle v_k | \omega \rangle
\langle M | v_k \rangle |$.  Since $v_k \approx e$, and the excited
state $|M\rangle$ is a basis vector orthogonal to $e$, we have $\langle
M | v_k \rangle=0$. Hence the probability
$p_{\mathrm{max}}(\infty,\omega)=0$ and the distance
$d_\zeta(\omega,\infty)$ is infinity.

\section{Discussion and Conclusions}

The early numerical observation~\cite{first_with_sophie} that
homogeneous odd length $N$-spin chains have an ``anti-gravity'' center
has been {\it analytically} confirmed in the $N \to \infty$ limit by
developing closed-form formulas for the asymptotic maximum excitation
transfer probability, and by showing that the anti-core has the lowest
probability of being excited or of transmitting its excitation. As shown
in Sec.~\ref{s:infinite_chains}, the phenomena exhibited in
Fig.~\ref{f:dramatic_anti_core} can be accurately explained by the
$N=\infty$ asymptotic formula.

The existence of an anti-core at the center of a linear array of spins
shows that excitations in a spin network do not propagate as they would
in a classical network.  In a classical linear network any excitation in
one half of the chain must transit through the center to reach the other
half, and the center would thus be expected to be a congestion core.  In
quantum networks, however, excitations can be transferred from one end
of a chain to the other without passing through the center due to the
intrinsic entanglement present in the eigenstates of the system.  When a
single excitation is created in one location what is really created is a
wavepacket, which is a superposition of many eigenstates of the system
Hamiltonian that subsequently evolve and interfere.  High probability of
transmission requires constructive interference at any particular node
at some time, and the existence of an anti-core shows that, surprisingly,
there is least constructive interference for the center of the chain.

As we have also shown, it is possible to engineer chains such that the
diameter of the chain goes to infinity even if the physical number of
spins is finite.  The simplest way to achieve this is by applying a bias
to the central spin in an odd-length chain.  By increasing the bias we
can increase the diameter even for a finite chain and achieve infinite
diameter in the limit of infinite bias.  This can be explained in terms
of the bias moving the central spin further and further away from the
other spins and therefore effectively decoupling the chains.  However,
for any finite bias, no matter how large, an excitation in one half of
the chain can tunnel through the obstruction in the center given
sufficient time, allowing almost perfect excitation transfer between the
end spins of the engineered chain.

Such finite length, infinite diameter chains lend themselves to a coarse
Gromov analysis, as Section~\ref{s:engineered} shows.  The specific
feature, demonstrated on chains of limited length but conjectured to
hold for longer chains, is a Gromov boundary reduced to a singleton.
This strongly contrasts with the classical network paradigm of a Gromov
boundary with at least two points, creating the congestion
core~\cite{baryshnikov_tucci}.  Although there is early indication that
the difference in cardinalities of the Gromov boundaries might be at
least part of the explanation of the core versus anti-core discrepancy,
more analysis is needed to prove a general fact and is left for further
research.

\appendix

\section{Proof of Theorem~\ref{t:semi_infinite_chains} (semi-infinite chain)}
\label{s:semi_infinite_chains}

We proceed from
\begin{equation} 
\label{e:pmax_explicit_in_N}
\sqrt{p_{\mathrm{max}}^{[1:N]}(i,j)}=\frac{2}{N+1} \sum_{k=1}^{2n+1}
\left| \sin \frac{\pi k i}{2(n+1)} \sin \frac{\pi k j}{2(n+1)} \right|, 
\end{equation}
where $N=2n+1$ is the (odd) number of spins and $i$ and $j$ are the
positions of the two spins {\it relative to the left-most spin (1).}
Since the number of spins will be taken to infinity, we make the
dependency on such number explicit.

\subsection{Asymptotic maximum transfer probability}

Defining $x_k'=k/(2(n+1))$ for $k=0,\ldots,2n+1$, the right-hand side
of~\eqref{e:pmax_explicit_in_N} becomes
\begin{equation*}
  \sqrt{p_{\mathrm{max}}^{[1:N]}(i,j)} = 
  2\sum_{k=1}^{2n+1} \left| \sin \pi x_k' i\right| \left| \sin \pi x_k' j \right| (x_{k}'-x_{k-1}').
\end{equation*}
Taking the limit $n \to \infty$, the above becomes
\begin{align*}
\sqrt{p_{\mathrm{max}}^{\rightarrow}(i,j)}
&:=\lim_{n \to \infty} \sqrt{p_{\mathrm{max}}^{[1:N]}(i,j)} \\ 
&= 2 \int_0^1 |\sin  \pi i x'||\sin \pi j x'| \, dx'  
 = 4 \int_0^{1/2} |\sin 2 \pi i x||\sin 2 \pi j x|\, dx.
\end{align*}
Since $|\sin 2 \pi i x| = \left|\sin \left(2 \pi
i\left(x+\frac{1}{2}\right)\right)\right|$, the above becomes
\begin{align*}
\sqrt{p_{\mathrm{max}}^{\rightarrow}(i,j)}
=& 2 \int_0^{1/2} \!\!\! |\sin 2 \pi i x||\sin 2 \pi j x|dx\\
  &~~~~~+2 \int_0^{1/2} \!\!\! \left|   \sin 2 \pi i \left( \tfrac{1}{2}+x \right)   \right|
        \left| \sin 2 \pi j \left(\tfrac{1}{2}+x \right)\right|dx\\
=& 2 \int_0^1 |\sin 2 \pi i x||\sin 2 \pi j x|dx.
\end{align*}
Next, observe that $|\sin 2 \pi i x|=\sin (2 \pi i x) s_i(x)$ where
$s_i(x)$ is a periodic square wave with fundamental $\sin 2 \pi i x$ and
Fourier decomposition
\[ s_i(x)=\frac{4}{\pi} \sum_{p=1,3,\ldots} \frac{1}{p}\sin 2 \pi i p x. \]
Therefore, the absolute values in the integral representation of
$\sqrt{p_{\mathrm{max}}(i,j)}$ can be removed as follows:
\begin{align*}
 \sqrt{p_{\mathrm{max}}^{\rightarrow}(i,j)}
 =\frac{32}{\pi^2}\sum_{p,q=1,3,\ldots}\frac{1}{pq}\int_0^1 
(\sin 2 \pi i x \sin 2 \pi i p x)( \sin 2 \pi j x \sin 2 \pi j q x) dx. 
\end{align*}
Next, utilizing several well-known trigonometric identities, we get, successively, 
\begin{align*}
\lefteqn{\sqrt{p_{\mathrm{max}}^{\rightarrow}(i,j)}}\\
=&\frac{8}{\pi^2} \sum_{p,q}\frac{1}{pq}\int_0^1 
  (\cos 2 \pi i (p-1)x-\cos 2 \pi i (p+1)x)\\
&~~~~~~~~~~~~~~~~~~~~~~~\cdot(\cos 2 \pi j (q-1)x - \cos 2 \pi j (q+1)x) \, dx\\
=& \frac{4}{\pi^2} \sum_{p,q}\frac{1}{pq}\int_0^1 (
  (\cos 2 \pi (i(p-1)-j(q-1))x+\cos 2 \pi(i(p-1)+j(q-1))x) \\
~&\qquad \qquad -(\cos 2 \pi (i(p-1)-j(q+1)x)+\cos 2 \pi (i(p-1)+j(q+1))x)\\
~&\qquad \qquad  -(\cos 2 \pi (i(p+1)-j(q-1))x+\cos 2\pi ( i (p+1)+j(q-1))x)\\
~&\qquad \qquad +(\cos 2 \pi (i(p+1)-j(q+1))x+\cos 2 \pi (i(p+1)+j(q+1))x)\, )dx\\
=&\frac{4}{\pi^2} \sum_{p,q}\frac{1}{pq}(\mathbbm{1}_{i(p-1)-j(q-1)=0}+\mathbbm{1}_{i(p-1)+j(q-1)=0}\\
~& \qquad\qquad -\mathbbm{1}_{i(p-1)-j(q+1)=0}-\mathbbm{1}_{i(p-1)+j(q+1)=0}\\
~& \qquad\qquad -\mathbbm{1}_{i(p+1)-j(q-1)=0}-\mathbbm{1}_{i(p+1)+j(q-1)=0}\\
~& \qquad\qquad +\mathbbm{1}_{i(p+1)-j(q+1)=0}+\mathbbm{1}_{i(p+1)+j(q+1)=0}\, ),
\end{align*}
where $\mathbbm{1}_{L}$ takes the value $1$ if the logical statement $L$
is true and $0$ otherwise.

Observe that, since $i,p,j,q \geq1$,
\[ \mathbbm{1}_{i(p-1)+j(q+1)=0}=0, \:\mathbbm{1}_{i(p+1)+j(q-1)=0}=0, \:\mathbbm{1}_{i(p+1)+j(q+1)=0}=0\]
and that $\mathbbm{1}_{i(p-1)+j(q-1)=0}$ takes the value $1$ only for $p=q=1$. 
Thus, in the above, the sum over $p,q$ of the right-hand side terms amounts to $1$. 

Next, we look at the left-hand side terms of the sum over $p,q$.  For
such a statement as $i(p-1)-j(q-1)=0$ to be true, we need $p-1=m{\bf j}$
and $q-1=m{\bf i}$ for some $m \in \mathbb{N}$, where
$\mathbf{j}=j/\mathrm{gcd}(i,j)$ and ${\bf
i}=i/\mathrm{gcd}(i,j)$.  This yields $p=m{\mathbf j}+1$ and $q=m{\bf
i}+1$.  Observe that $m=0$ is an admissible value, since this yields
$p=1$, $q=1$ and hence $i(p-1)-j(q-1)=0$.  Recapitulating and following
up with the same argument on the other logical statements, we find
\begin{align*}
i(p-1)-j(q-1)=0 & \Leftrightarrow \left\{ \begin{array}{c}
                                                                  p=m{\bf j}+1,\\
                                                                  q=m{\bf i}+1,
                                                                  \end{array}\right.\\
i(p-1)-j(q+1)=0 & \Leftrightarrow \left\{ \begin{array}{c}
                                                                  p=m{\bf j}+1,\\
                                                                  q=m{\bf i}-1,
                                                                  \end{array}\right.\\
i(p+1)-j(q-1)=0 & \Leftrightarrow \left\{ \begin{array}{c}
                                                                  p=m{\bf j}-1,\\
                                                                  q=m{\bf i}+1,
                                                                  \end{array}\right.\\
i(p+1)-j(q+1)=0 & \Leftrightarrow \left\{ \begin{array}{c}
                                                                  p=m{\bf j}-1,\\
                                                                  q=m{\bf i}-1.
                                                                  \end{array}\right.
\end{align*}
Note that, since $p,q \geq 1$, the solution $m=0$ is not admissible for
the second, third, and forth cases, since this would entail either $p$
or $q$ or both of them to equal $-1$.  In addition, $p$ and $q$ must be
restricted to be odd, that is, both $m{\bf i}$ and $m{\bf j}$ must be
even.  Since ${\bf i}$ and ${\bf j}$ are relatively prime, they cannot
be both even; thus $m$ must be even.  The solution $m=0$ is still
acceptable for the first case, since it makes both $p$ and $q$ odd.

To summarize, the sum over $p,q=1,3,\ldots$ reduces to $2$ plus a sum
over $m \in \mathbb{N}^*$, subject to the restrictions that $m{\bf i}-1
\ne 0$ and $m {\bf j}-1 \ne 0$.  Changing the sum over $p,q$ to a sum
over $m$, it follows that
\begin{align}
\label{e:series_representation}
\sqrt{p_{\mathrm{max}}^{\rightarrow}(i,j)}
=& \frac{4}{\pi^2} \left[ 2+ \sum_{m\in M} 
   \frac{1}{(m{\bf j}+1)(m{\bf i}+1)}-\frac{1}{(m{\bf j}+1)(m{\bf i}-1)} \right. \nonumber\\
 & \;\qquad\qquad\left. -\frac{1}{(m{\bf j}-1)(m{\bf i}+1)}+\frac{1}{(m{\bf j}-1)(m{\bf i}-1))} \right]
  \nonumber\\
=& \frac{4}{\pi^2}\left( 2+\sum_{m\in M} \frac{4}{(m^2{\bf j}^2-1)(m^2{\bf i}^2-1)}   \right),
\end{align}
where $M = \{ m \in \mathbb{N}^*: m \mbox{ is even}, m^2{\bf i}^2-1\ne 0, m^2{\bf j}^2-1 \ne 0\}$. 
This proves the infinite series representation of Theorem~\ref{t:semi_infinite_chains}. $\blacksquare$

\subsection{Closed form of asymptotic transfer probability}

Next, we express the infinite series of
Theorem~\ref{t:semi_infinite_chains} in terms of elementary
functions. First, consider the partial fraction decomposition
\begin{equation*}
 \frac{1}{(m^2{\bf j}^2-1)(m^2{\bf i}^2-1)}   
  =\frac{A(\bf{i},\bf{j})}{(m^2{\bf j}^2-1)}+\frac{A(\bf{j},\bf{i})}{(m^2{\bf i}^2-1)},
 \qquad
 A(\bf{i},\bf{j}) := \frac{\bf{i}^2}{\bf{j}^2-i^2}. 
\end{equation*}
Next, observe the following lemma:
\begin{lemma}
\label{l:cotangent}
\[ 
   \sum_{m=2,4,6,...} \frac{1}{(m^2{\bf i}^2-1)} =
   \frac{1}{2} \left( 1-\frac{\pi}{2\bf{i}} \cot \frac{\pi}{2\bf{i}}\right). 
\]
\end{lemma}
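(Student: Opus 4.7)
The plan is to derive Lemma~\ref{l:cotangent} as a direct consequence of the classical Mittag--Leffler (Euler) partial-fraction expansion of the cotangent,
\[
\pi x \cot(\pi x) \;=\; 1 \;-\; 2 \sum_{n=1}^{\infty} \frac{x^{2}}{n^{2} - x^{2}}, \qquad x \notin \mathbb{Z},
\]
specialized to $x = 1/(2\mathbf{i})$. Since $\mathbf{i} = i/\gcd(i,j)$ is a positive integer, $x$ lies in $(0,1/2]$, which avoids the integer poles of $\cot(\pi x)$, so the expansion is applicable. (In the edge case $\mathbf{i}=1$ one has $\cot(\pi/2)=0$, and the lemma reduces to $\sum_{k\ge 1} 1/(4k^{2}-1) = 1/2$, which is immediate from the telescoping partial fraction $\tfrac{1}{(2k-1)(2k+1)} = \tfrac12\bigl(\tfrac{1}{2k-1}-\tfrac{1}{2k+1}\bigr)$ and matches the claimed formula; this also serves as a useful sanity check.)

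With $x = 1/(2\mathbf{i})$, each term of the cotangent series rearranges as
\[
\frac{x^{2}}{n^{2} - x^{2}} \;=\; \frac{1/(4\mathbf{i}^{2})}{n^{2} - 1/(4\mathbf{i}^{2})} \;=\; \frac{1}{4 n^{2} \mathbf{i}^{2} - 1},
\]
so reindexing via $m=2n$, which puts the $n$-sum into bijection with the positive even integers indexing the $m$-sum in the lemma, yields
\[
\frac{\pi}{2\mathbf{i}} \cot \frac{\pi}{2\mathbf{i}} \;=\; 1 \;-\; 2 \sum_{m=2,4,6,\ldots} \frac{1}{m^{2} \mathbf{i}^{2} - 1}.
\]
Solving for the sum produces exactly the asserted identity.

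The only nontrivial ingredient is the cotangent expansion itself, which is a standard result of complex analysis (derivable from the infinite product for $\sin$, or independently via the Herglotz trick, and available in any text on entire functions). Absolute convergence of the series is immediate from the bound $1/(m^{2}\mathbf{i}^{2}-1) = O(1/m^{2})$, so the reindexing $n \mapsto m = 2n$ poses no rearrangement subtlety, and no further estimates are required. Thus the proof is essentially a one-line substitution once the cotangent expansion is invoked.
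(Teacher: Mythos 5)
Your proof is correct and follows essentially the same route as the paper: the paper likewise invokes the classical partial-fraction expansion $\pi\cot(\pi z)=\tfrac{1}{z}+2z\sum_{n\ge 1}\tfrac{1}{z^2-n^2}$, substitutes $z=1/(2\mathbf{i})$, and reindexes $m=2n$. Your added remarks on convergence and the $\mathbf{i}=1$ sanity check are fine but not needed.
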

\begin{proof}
In the known expression for the cotangent,
\[ 
  \pi \cot(\pi z)=\frac{1}{z}+2z \sum_{n=1}^{\infty}\frac{1}{z^2-n^2}, 
\]
set $z=1/2\bf{i}$. This yields
\[ 
  \sum_{n=1}^\infty \frac{1}{(2n\bf{i})^2-1}= 
  \frac{1}{2}\left( 1- \left( \frac{\pi}{2\bf{i}}  \right) \cot \left(\frac{\pi}{2\bf{i}}  \right)\right). 
\]
Setting $m=2n$ yields the result. 
\end{proof}

Putting everything together using the lemma yields
\[ 
  \sqrt{p_{\mathrm{max}}^{\rightarrow}(i,j)}
  =\frac{8}{\pi^2} \left( \frac{\bf{i}^2}{\bf{i}^2-\bf{j}^2}
                   \left(\frac{\pi}{2\bf{i}}\right)\cot\left(\frac{\pi}{2\bf{i}} \right)
   -\frac{\bf{j}^2}{\bf{i}^2-\bf{j}^2} \left(\frac{\pi}{2\bf{j}} \right)\cot\left(\frac{\pi}{2\bf{j}} 
  \right) \right) 
\]
and Theorem~\ref{t:semi_infinite_chains} is proved. $\blacksquare$

\section{Proofs of Theorem~\ref{t:doubly_infinite_chains_same} (doubly-infinite chain)}
\label{s:doubly_infinite_chains}

We proceed from
\[ \sqrt{p_{\mathrm{max}}^{[1:N]}(i,j)}=\frac{2}{N+1} \sum_{k=1}^{2n+1}
\left| \sin \frac{\pi k i}{2(n+1)} \sin \frac{\pi k j}{2(n+1)} \right|, \]
where $N=2n+1$ is the (odd) number of spins and $i$ and $j$ are the positions of the two spins 
{\it relative to the central spin (n+1).} 
Since  the number of spins will be taken to infinity, we make the dependency on such number explicit. 

\subsection{Referencing max.\ transfer probability to anti-core}

The first operation is to do the change of variable $k'=k-(n+1)$ and
convert the sum as $k$ goes from $1$ to $2n+1$ to a sum where $k'$ goes
from $-n$ to $+n$. After some manipulation, the following is found:
\begin{equation*}
  \sqrt{p_{\mathrm{max}}^{[1:N]}(i,j)}
  =\frac{2}{N+1} \sum_{k'=-n}^{+n}  
    \left|f\left( \frac{\pi k' i}{2(n+1)}\right)g\left(\tfrac{\pi k' j}{2(n+1)}\right)\right|, 
\end{equation*}
where $f,g$ are given in Table~\ref{t:f_and_g}. 

\begin{table}[h]
\caption{The functions $f$ and $g$.}
\begin{center}
\begin{tabular}{||c|c|c||}\hline\hline
               & $f(\cdot)$ & $g(\cdot)$ \\\hline\hline
$i,j$ even & $\sin(\cdot)$ & $\sin(\cdot)$ \\\hline
$i,j$ odd  & $\cos(\cdot)$ & $\cos(\cdot)$ \\\hline
$i$ even, $j$ odd & $\sin(\cdot)$ & $\cos(\cdot)$ \\\hline
$i$ odd, $j$ even & $\cos(\cdot)$ & $\sin(\cdot)$ \\\hline\hline
\end{tabular} 
\end{center}
\label{t:f_and_g}
\end{table}

The next step is the change of variables $i'=i-(n+1)$ and $j'=j-(n+1)$.
With this change of variables, the position of the spins are relative to
the anti-core, $n+1$.  This change of variables leads to the following:
\begin{align*}
\sqrt{p^{[-n:+n]}_{\mathrm{max}}(i',j')}
&:= \sqrt{p_{\mathrm{max}}^{[1:N]}(i'+(n+1),j'+(n+1))}\\
& = \frac{2}{N+1} \sum_{k'=-n,\mathrm{even}}^{+n}
   \left|f\left( \frac{\pi k' i'}{2(n+1)}\right)g\left( \frac{\pi k' j'}{2(n+1)}\right)\right|\\
& \quad +\frac{2}{N+1} 
   \sum_{k'=-n,\mathrm{odd}}^{+n}
   \left|\bar{f}\left( \frac{\pi k' i'}{2(n+1)}\right)\bar{g}\left( \frac{\pi k' j'}{2(n+1)}\right)\right|,
\end{align*}
where $f,g$ are still given by Table~\ref{t:f_and_g} and
$\bar{f}(\cdot)=\cos(\cdot),\sin(\cdot)$ whenever
$f(\cdot)=\sin(\cdot),\cos(\cdot)$, resp., with the same definition for
$\bar{g}$.  Since the most recent formula is in terms of $i',j'$, we
rewrite Table~\ref{t:f_and_g} in terms of $i',j'$ and $n$ in
Table~\ref{t:in_terms_of}.

\begin{table}[h]
\caption{The functions $f$ and $g$ in terms of $i',j'$ and $n$.}
\begin{center}
\begin{tabular}{||c|c|c||}\hline\hline
               & $f(\cdot)$ & $g(\cdot)$ \\\hline\hline
$i'+n,j'+n$ odd & $\sin(\cdot)$ & $\sin(\cdot)$ \\\hline
$i'+n,j'+n$ even  & $\cos(\cdot)$ & $\cos(\cdot)$ \\\hline
$i'+n$ odd, $j'+n$ even & $\sin(\cdot)$ & $\cos(\cdot)$ \\\hline
$i'+n$ even, $j'+n$ odd & $\cos(\cdot)$ & $\sin(\cdot)$ \\\hline\hline
\end{tabular} 
\end{center}
\label{t:in_terms_of}
\end{table}

\subsection{Towards asymptotic maximum probability}

In anticipation of letting $n \to \infty$, define
$x_{k'}:=\frac{k'}{4(n+1)}$ and the preceding sums can be rewritten as
\begin{align*}
\sqrt{p^{[-n:+n]}_{\mathrm{max}}(i',j')}
&= 2\sum_{k'=-n,\mathrm{even}}^{+n}
   \left|f\left( 2 \pi x_{k'}i'\right) g\left( 2\pi x_{k'} j'\right)\right|\left(x_{k'+2}-x_{k'} \right)\\
&\quad + 2 
  \sum_{k'=-n,\mathrm{odd}}^{+n}
  \left|\bar{f}\left( 2 \pi x_{k'}i' \right)
        \bar{g}\left( 2\pi x_{k'} j' \right)\right| \left(x_{k'+2}-x_{k'} \right). 
\end{align*}
Letting $n \to \infty$ yields
\begin{align*}
\sqrt{p^{\leftrightarrow}_{\mathrm{max}}(i',j')}
&:= \lim_{n\to \infty}\sqrt{p^{[-n:+n]}_{\mathrm{max}}(i',j')}\\
&=       2\int_{-1/4}^{+1/4} |f(2\pi xi')g(2\pi x j')| dx  
       + 2\int_{-1/4}^{+1/4}|\bar{f}(2\pi xi')\bar{g}(2\pi x j')| dx.
\end{align*} 

In order to make the integrations more straightforward and to follow a
procedure that parallels the one of
Appendix~\ref{s:semi_infinite_chains}, it is convenient to change the
integration limits by making use of the periodicity of the integrands as
functions of $x$.  Observe that both $fg$ and $\bar{f}\bar{g}$ have
decompositions in terms of sines or cosines of arguments $2\pi x (i' \pm
j')$. Write the generic term as $\left\{\begin{array}{c} \sin\\ \cos
\end{array}\right\}(2\pi x (i' \pm j'))$. 
If $i'\pm j'$ is even, observe that 
$\left\{\begin{array}{c}
\sin\\
\cos
\end{array}\right\}(2\pi (x+1/2) (i' \pm j'))=
\left\{\begin{array}{c}
\sin\\
\cos
\end{array}\right\}(2\pi x (i' \pm j'))$. 
If $i'\pm j'$ is odd, 
$\left\{\begin{array}{c}
\sin\\
\cos
\end{array}\right\}(2\pi (x+1/2) (i' \pm j'))=
-\left\{\begin{array}{c}
\sin\\
\cos
\end{array}\right\}(2\pi x (i' \pm j'))$. 
In either case, $|fg|$ and $|\bar{f}\bar{g}|$ have period $1/2$. 
With this property, the previous integrals can be rewritten as 
\begin{equation}
\label{e:general}
\sqrt{p^{\leftrightarrow}_{\mathrm{max}}(i',j')}
= \int_{-1/2}^{+1/2} |f(2\pi xi')g(2\pi x j')| dx 
 +\int_{-1/2}^{+1/2}|\bar{f}(2\pi xi')\bar{g}(2\pi x j')| dx.
\end{equation} 
Observe that $p_{\mathrm{max}}(i',j') \leq 1$, as easily seen from a Cauchy-Schwarz argument. 
Also observe that $p_{\mathrm{max}}(i',i')=1$; indeed, if $i'=j'$, 
Table~\ref{t:in_terms_of} reveals that the integrands are of the form 
$|\cos(2\pi i'x)\cos(2\pi i'x)|$ or $|\sin(2\pi i'x)\sin(2\pi i'x)|$, 
from which the assertion is trivial. 

The next step is to make $|f|$, $|g|$, $|\bar{f}|$, $|\bar{g}|$ more
manageable by expressing them as $f(2\pi xi')s_{i'}(x)$, $g(2\pi x
j')s_{j'}(x)$, $\bar{f}(2\pi x i')s_{i'}(x)$, $\bar{g}(2\pi x
j')s_{j'}(x)$ if $f$, $g$, $\bar{f}$, $\bar{g}$ are sines and by $f(2\pi
xi')c_{i'}(x)$, $g(2\pi x j')c_{j'}(x)$, $\bar{f}(2\pi x i')c_{i'}(x)$,
$\bar{g}(2\pi x j')c_{j'}(x)$ if they are cosines. In the preceding,
$s_{i'}(x)$ and $c_{i'}(x)$ are odd and even, resp., square waves of
unit amplitude and of period $1$, with Fourier decompositions
\begin{align*}
s_{i'}(x)&=\frac{4}{\pi} \sum_{p=1,3,...} \frac{1}{p} \sin (2 \pi i'px),\\
c_{i'}(x)&=\frac{4}{\pi} \sum_{p=1,3,...} \frac{(-1)^{\frac{p-1}{2}}}{p} \cos (2 \pi i'px).
\end{align*}

At this stage, it is necessary to be more specific as to what $f$, $g$, $\bar{f}$, $\bar{g}$ are.

\subsection{Consistency of asymptotic max.\ transfer probability}
\label{s:consistency}

\subsubsection{$i'$ and $j'$ even}

If $i'$ and $j'$ are even, and if we let $n \to \infty$ along the even
number subsequence of $\mathbb{N}$, we need to take
$f(\cdot)=\cos(\cdot)$ and $g(\cdot)=\cos(\cdot)$, as seen from
Table~\ref{t:in_terms_of}, together with $\bar{f}(\cdot)=\sin(\cdot)$
and $\bar{g}(\cdot)=\sin(\cdot)$.  If on the other hand, we let $n \to
\infty$ along the odd number subsequence of $\mathbb{N}$, we need to
take $f(\cdot)=\sin(\cdot)$ and $g=\sin(\cdot)$, together with
$\bar{f}(\cdot)=\cos(\cdot)$ and $\bar{g}(\cdot)=\cos(\cdot)$ .
However, because of the symmetry of formula~\eqref{e:general}, both
subsequences yield the same result:
\begin{align*}
\lefteqn{\sqrt{p^{\leftrightarrow}_{\mathrm{max}}(i',j')}}\\
&= \int_{-1/2}^{+1/2}|\cos(2\pi xi') \cos(2\pi x j')| dx
  +\int_{-1/2}^{+1/2}|\sin(2\pi xi') \sin(2\pi x j')| dx\\
&= \int_{-1/2}^{+1/2}\cos(2\pi xi') c_{i'}(x)\cos(2\pi x j') c_{j'}(x) dx    \\
  &~~~~~~~~~~~~~~~~~~~~~~~~~~~~+\int_{-1/2}^{+1/2}\sin(2\pi xi') s_{i'}(x)\sin(2\pi x j') s_{j'}(x) dx.
\end{align*}

\subsubsection{$i'$ and $j'$ odd}

The argument is the same as before and the preceding formula still holds.

\subsubsection{$i'$ odd and $j'$ even}

From Table~\ref{t:in_terms_of}, we could let $n \to \infty$ along the
even number subsequence of $\mathbb{N}$, in which case we need to take
$f(\cdot)=\sin(\cdot)$ and $g(\cdot)=\cos(\cdot)$.  If we let $n \to
\infty$ along the odd number subsequence of $\mathbb{N}$, we need to
take $f(\cdot)=\cos(\cdot)$ and $g(\cdot)=\sin(\cdot)$. In either case,
because of the symmetry of~\eqref{e:general}, the result is the same and
is given by
\begin{align*}
\lefteqn{\sqrt{p^{\leftrightarrow}_{\mathrm{max}}(i',j')}}\\
&=\int_{-1/2}^{+1/2} |\sin(2\pi xi') \cos (2\pi x j')| dx
 +\int_{-1/2}^{+1/2} |\cos(2\pi xi') \sin (2\pi x j')| dx\\
&=\int_{-1/2}^{+1/2}  \sin(2\pi xi') s_{i'}(x)\cos (2\pi x j') c_{j'}(x) dx\\
 &~~~~~~~~~~~~~~~~~~~~~~~~+\int_{-1/2}^{+1/2}  \cos(2\pi xi') c_{i'}(x)\sin (2\pi x j') s_{j'}(x) dx.
\end{align*}

\subsubsection{$i'$ even and $j'$ odd}

The formula of the preceding section remains valid. To prove it, just
interchange the role of $i'$ and $j'$.

\subsection{Towards integration by quadrature}

From the above, it follows that all cases share a few quadrature
integrals:
\begin{align*}
\lefteqn{\int_{-1/2}^{+1/2}\sin(2\pi xi') s_{i'}(x)\sin(2\pi x j') s_{j'}(x)dx}\\
&=\frac{16}{\pi^2} \sum_{p,q=1,3,...} \frac{1}{pq}
   \int_{-1/2}^{1/2}\sin(2\pi i'x)\sin(2\pi i'px)\sin(2\pi j'x)\sin(2\pi j'qx)dx\\
&=\frac{4}{\pi^2} \sum_{p,q}\frac{1}{pq} 
   \int_{-1/2}^{1/2}(\cos(2\pi i'(p-1)x)-\cos(2\pi i' (p+1)x))\\
  &~~~~~~~~~~~~~~~~~~~~~~~~~~~~ \cdot    (\cos(2\pi j'(q-1)x)-\cos(2\pi j' (q+1)x))dx\\
&=\frac{2}{\pi^2} \sum_{p,q} \frac{1}{pq}\int_{-1/2}^{1/2}
                 (\sum \cos(2\pi(i'(p-1)\pm j'(q-1))x)\\
&~~~~~~~~~~~~~~~~~~~~~~~~~~~~~~~~~~~~-\sum \cos(2\pi(i'(p-1)\pm j'(q+1))x)\\
& \quad           -\sum \cos(2\pi(i'(p+1)\pm j'(q-1))x)+\sum \cos(2\pi(i'(p+1) \pm j'(q+1))x))dx.
\end{align*}
The right-hand side of the last equality involves expressions like
$\cos(2\pi(a+b)x)+\cos(2\pi(a-b)x)$. To simplify the notation, we wrote
such expressions as $\sum \cos(2\pi(a\pm b)x)$.

\begin{align*}
\lefteqn{(-1)^{\frac{p+q}{2}-1}\int_{-1/2}^{+1/2}\cos(2\pi xi') c_{i'}(x)\cos(2\pi x j') c_{j'}(x)dx}\\
&= \frac{16}{\pi^2} \sum_{p,q=1,3,...}\frac{1}{pq} \int_{-1/2}^{1/2}\cos(2\pi i'x)\cos(2\pi i'px)\cos(2\pi j'x)\cos(2\pi j'qx)dx\\
&=\frac{4}{\pi^2} \sum_{p,q}\frac{1}{pq} \int_{-1/2}^{1/2}(\cos(2\pi i'(p+1)x)+\cos(2\pi i' (p-1)x))\\
               &~~~~~~~~~~~~~~~~~~~~~~~~~~\cdot(\cos(2\pi j'(q+1)x)+\cos(2\pi j' (q-1)x))dx\\
&=\frac{2}{\pi^2} \sum_{p,q} \frac{1}{pq}\int_{-1/2}^{1/2}
               (\sum \cos(2\pi(i'(p+1)\pm j'(q+1))x)\\
        &~~~~~~~~~~~~~~~~~~~~~~~~~~~~ +\sum \cos(2\pi(i'(p+1)\pm j'(q-1))x)\\
&\quad             +\sum \cos(2\pi(i'(p-1)\pm j'(q+1))x)
              +\sum \cos(2\pi(i'(p-1) \pm j'(q-1))x))dx;
\end{align*}

\begin{align*}
\lefteqn{(-1)^{\frac{q-1}{2}}\int_{-1/2}^{+1/2}\sin(2\pi xi') s_{i'}(x)\cos(2\pi x j') c_{j'}(x)dx}\\
&= \frac{16}{\pi^2} \sum_{p,q=1,3,...}\frac{1}{pq} \int_{-1/2}^{1/2}\sin(2\pi i'x)\sin(2\pi i'px)\cos(2\pi j'x)\cos(2\pi j'qx)dx\\
&=\frac{4}{\pi^2} \sum_{p,q} \frac{1}{pq}\int_{-1/2}^{1/2}(\cos(2\pi i'(p-1)x)-\cos(2\pi i' (p+1)x))\\
                 &~~~~~~~~~~~~~~~~~~~~~~~~\cdot (\cos(2\pi j'(q-1)x)+\cos(2\pi j' (q+1)x))dx\\
&=\frac{2}{\pi^2} \sum_{p,q} \frac{1}{pq}\int_{-1/2}^{1/2}
               (\sum \cos(2\pi(i'(p-1)\pm j'(q-1))x)\\
          &~~~~~~~~~~~~~~~~~~~~~~~~~~~~~~~~     +\sum \cos(2\pi(i'(p-1)\pm j'(q+1))x)\\
& \quad             -\sum \cos(2\pi(i'(p+1)\pm j'(q-1))x)
               -\sum \cos(2\pi(i'(p+1) \pm j'(q+1))x))dx;
\end{align*}

\begin{align*}
\lefteqn{(-1)^{\frac{p-1}{2}}\int_{-1/2}^{+1/2}\cos(2\pi xi') c_{i'}(x)\sin(2\pi x j') s_{j'}(x)dx}\\
&= \frac{16}{\pi^2} \sum_{p,q=1,3,...}\frac{1}{pq} \int_{-1/2}^{1/2}\cos(2\pi i'x)\cos(2\pi i'px)\sin(2\pi j'x)\sin(2\pi j'qx)dx\\
&=\frac{4}{\pi^2} \sum_{p,q} \frac{1}{pq}\int_{-1/2}^{1/2}(\cos(2\pi i'(p+1)x)+\cos(2\pi i' (p-1)x))\\
                 &~~~~~~~~~~~~~~~~~~~~~~~~~\cdot (\cos(2\pi j'(q-1)x)-\cos(2\pi j' (q+1)x))dx\\
&=\frac{2}{\pi^2} \sum_{p,q}\frac{1}{pq} \int_{-1/2}^{1/2}
               (\sum \cos(2\pi(i'(p+1)\pm j'(q-1))x)\\
    &~~~~~~~~~~~~~~~~~~~~~~~~~~~~~~~~~~          -\sum \cos(2\pi(i'(p+1)\pm j'(q+1))x)\\
& \quad +\sum \cos(2\pi(i'(p-1)\pm j'(q-1))x)
               -\sum \cos(2\pi(i'(p-1) \pm j'(q+1))x))dx.
\end{align*}

\subsection{Asymptotic max.~transfer probability around anti-core}

Here we proceed from the general formula~\eqref{e:general} for
$\sqrt{p_{\mathrm{max}}^{\leftrightarrow}(i',j')}$, utilize the
quadrature integrals of the preceding section, and derive, first, an
infinite series representation of the asymptotic maximum transfer
probability and, finally, a representation in terms of special
functions.  Since the general formula~\eqref{e:general} is in terms of
function $f$, $g$, $\bar{f}$, $\bar{g}$ that depend on whether $i'$ and
$j'$ are even or odd (see Section~\ref{s:consistency}), it is necessary
to examine each case in particular.  From Section~\ref{s:consistency},
it follows that the case where both $i'$ and $j'$ are even and the case
where both $i'$ and $j'$ are odd are the same.  From the same
Section~\ref{s:consistency}, the case where $i'$ is even and $j'$ odd is
the same as the case where $i'$ odd and $j'$ even, but is different from
the preceding one. So, there are essentially two cases to be
distinguished.

\subsubsection{Both $i'$ and $j'$ even or both $i'$ and $j'$ odd}

\begin{align*}
\sqrt{p^{\leftrightarrow}_{\mathrm{max}}(i',j')}
&= \frac{4}{\pi^2} \sum_{p,q} \frac{1}{pq}\int_{-1/2}^{1/2}
    \left( c(p,q)\sum \cos(2\pi(i'(p-1)\pm j'(q-1))x)\right.\\
& \quad +c(p,q)\sum \cos(2\pi(i'(p+1)\pm j'(q+1))x)\\
& \quad -d(p,q)\sum \cos(2\pi(i'(p-1)\pm j'(q+1))x)\\
& \quad \left.-d(p,q)\sum \cos(2\pi(i'(p+1)\pm j'(q-1))x)\right)dx,
\end{align*}
where
\begin{equation*}
  c(p,q) = \frac{1}{2} \left(1 +(-1)^{\frac{p+q}{2}-1}\right),   \quad 
  d(p,q) = \frac{1}{2} \left(1 +(-1)^{\frac{p+q}{2}}\right).
\end{equation*}
$c(\cdot,\cdot)$ and $d(\cdot,\cdot)$ are functions taking value $0$ or
$1$, and complementary in the sense that $c(p,q)+d(p,q)=1$. Next, we
find that
\begin{align*}
&\sqrt{p^{\leftrightarrow}_{\mathrm{max}}(i',j')}\\
&=\frac{4}{\pi^2} \sum_{p,q} \frac{1}{pq}\left(
   c(p,q)\mathbb{I}_{2\pi(i'(p-1)\pm j'(q-1))x=0}
  +c(p,q)\mathrm{I}_{2\pi(i'(p+1)\pm j'(q+1))x=0 } \right.\\
&\quad -d(p,q)\mathrm{I}_{2\pi(i'(p-1)\pm j'(q+1))x=0}
  \left.-d(p,q)\mathrm{I}_{2\pi(i'(p+1)\pm j'(q-1))x=0}\right),
\end{align*}
where
\begin{equation*}
\ip = \frac{i'}{\mathrm{gcd}(i',j')}, \quad
\jp = \frac{j'}{\mathrm{gcd}(i',j')}.
\end{equation*}
Observe that $\ip$, $\jp$ are relatively prime; hence they could not be
both even.  The developments follow closely the semi-infinite chain
case, except that, here, $\ip$ and $\jp$ are not restricted to be
positive. Hence we have to consider several cases:

\paragraph{$\ip,\jp \geq 1$}
\label{s:both_or_both_geq_1}

The following is easily observed:
\begin{align*}
\sum_{p,q=1,3,...}\frac{c(p,q)}{pq} \mathbb{I}_{\ip(p-1)-\jp(q-1)=0}(p,q)
    &=\sum_{m=0,2,...}\frac{c(m \jp +1,m\ip +1)}{(m\jp+1)(m\ip+1)},\\
\sum_{p,q=1,3,...}\frac{c(p,q)}{pq} \mathbb{I}_{\ip(p-1)+\jp(q-1)=0}(p,q)&=c(1,1),\\
\sum_{p,q=1,3,...}\frac{c(p,q)}{pq} \mathbb{I}_{\ip(p+1)+\jp(q+1)=0}(p,q)&=0,\\
\sum_{p,q=1,3,...}\frac{c(p,q)}{pq} \mathbb{I}_{\ip(p+1)-\jp(q+1)=0}(p,q)
    &=\sum_{m=2,4,...}\frac{c(m \jp -1,m \ip-1)}{(m\jp-1)(m\ip-1)}.
\end{align*}
The situation is pretty much the same for the terms involving $d(p,q)$:
\begin{align*}
\sum_{p,q=1,3,...}\frac{d(p,q)}{pq} \mathbb{I}_{\ip(p-1)-\jp(q+1)=0}(p,q)
    &=\sum_{m=2,4,...}\frac{d(m \jp +1,m\ip -1)}{(m\jp+1)(m\ip-1)},\\
\sum_{p,q=1,3,...}\frac{d(p,q)}{pq} \mathbb{I}_{\ip(p-1)+\jp(q+1)=0}(p,q)&=0,\\
\sum_{p,q=1,3,...}\frac{d(p,q)}{pq} \mathbb{I}_{\ip(p+1)+\jp(q-1)=0}(p,q)&=0,\\
\sum_{p,q=1,3,...}\frac{d(p,q)}{pq} \mathbb{I}_{\ip(p+1)-\jp(q-1)=0}(p,q)
    &=\sum_{m=2,4,...}\frac{d(m \jp -1,m \ip+1)}{(m\jp-1)(m\ip+1)}.
\end{align*}

Here we have to make a distinction between the two cases: both $i'$ and
$j'$ odd and both $i'$ and $j'$ even.  We start with the easy case where
\underline{both i' and j' are odd.} In this case indeed, both $\ip\pm
\jp$ is even. This along with $m$ is even yields
\begin{align*}
c(m\jp +1,m \ip +1) &=1,\\
c(1,1)              &=1,\\
c(m\jp -1,m \ip -1) &=1,\\
d(m\jp +1,m \ip -1) &=1,\\
d(m\jp -1,m \ip +1) &=1.
\end{align*}
Putting everything together, we find, using partial fraction decompositions,
\begin{align*}
\sqrt{p_{\mathrm{max}}^{\leftrightarrow}(i',j')}
&=\frac{4}{\pi^2}\left( 2+\sum_{m=2,4,...}\frac{4}{(m^2\ip^2-1)(m^2 \jp^2-1)} \right) \\
&=\frac{4}{\pi^2}\left( 2+\frac{4}{\ip^2-\jp^2}  \sum_{m=2,4,...} \left(  \frac{\jp^2}{m^2\jp^2-1}-\frac{\ip^2}{m^2 \ip^2-1}\right)   \right).
\end{align*}

Finally, recall (Lemma~\ref{l:cotangent}) that the infinite sums can be
expressed in terms of cotangents; this yields
\begin{equation} 
\label{e:same_parity_same_sign}
\sqrt{p_{\mathrm{max}}^\leftrightarrow (i',j')}
= \frac{8}{\pi^2} \left( \frac{1}{\ip^2-\jp^2}
  \left( 
    \ip^2 \left(\frac{\pi}{2\ip}\right)  \cot \left(\frac {\pi}{2 \ip}\right)
  - \jp^2 \left(\frac{\pi}{2\jp}\right)  \cot \left(\frac {\pi}{2 \jp}\right)  
 \right) \right).
\end{equation}

The case where \underline{both $i'$ and $j'$ are even} is more
complicated. If $i'$ and $j'$ have the same power of $2$ in their prime
number factorization, then $\ip\pm \jp$ is even and the preceding
formula holds. If the powers of $2$ are different, then $\ip\pm \jp$ is
odd and it is easily verified that
\begin{equation*} 
\left.\begin{array}{c}
c(m\jp+1,m\ip+1)\\
c(m\jp-1,m\ip-1)\\
d(m\jp+1,m\ip-1)\\
d(m\jp-1,m\ip+1)
\end{array}\right\}=
\left\{\begin{array}{cll}
1 & \mbox{for}&m=0,4,8,12,..., \\
0 & \mbox{otherwise}.&
\end{array}\right.
\end{equation*}
With the above, we get
\begin{align*}
\sqrt{p_{\mathrm{max}}^{\leftrightarrow}(\ip,\jp)}
&=\frac{4}{\pi^2}\left(2 +\sum_{m=4,8,...}
   \frac{1}{(m\jp+1)(m\ip-1)} + \frac{1}{(m\jp+1)(m\ip+1)} \right.\\
& \qquad\qquad -\left.\frac{1}{(m\jp'+1)()m\ip-1}-\frac{1}{(m\jp-1)(m\ip+1)}\right)\\
&= \frac{4}{\pi^2}\left( 2+\sum_{m=4,8,...} \frac{4}{(m^2\jp^2-1)(m^2\ip^2-1)} \right)\\
&= \frac{4}{\pi^2}\left(  2+ \frac{4}{\ip^2-\jp^2}\sum_{m=4,8,...}\left(  \frac{\jp^2}{m^2\jp^2-1}-\frac{\ip^2}{m^2\ip^2-1}\right)\right).
\end{align*}

In order to derive a closed-form representation of the infinite series,
we need the following lemma:
\begin{lemma}
\label{l:4_cotangent}
\[ \sum_{m=4,8,...}\frac{1}{m^2\ip^2-1}=
\frac{1}{2}\left( 1-\left(\frac{\pi}{4\ip}\right)\cot\left( \frac{\pi}{4\ip}\right)  \right). \]
\end{lemma}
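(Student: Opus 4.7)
The plan is to mimic the proof of Lemma~\ref{l:cotangent} verbatim, with the sole change that we substitute a different argument into the standard partial-fraction expansion of the cotangent. Since the indexing set is now $m=4,8,12,\ldots$ rather than $m=2,4,6,\ldots$, the natural reindexing is $m=4n$ with $n=1,2,3,\ldots$, which turns the summand into
\[
\frac{1}{m^{2}\ip^{2}-1}=\frac{1}{(4n\ip)^{2}-1}.
\]
The point is that the previous lemma handled sums of the form $\sum 1/((2n\ip)^{2}-1)$ by choosing $z=1/(2\ip)$ in the identity $\pi\cot(\pi z)=1/z+2z\sum_{n\geq 1}1/(z^{2}-n^{2})$; here, since we need $(4n\ip)^{2}$ in place of $(2n\ip)^{2}$, the appropriate choice is $z=1/(4\ip)$.

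Concretely, I would set $z=1/(4\ip)$ in
\[
\pi\cot(\pi z)=\frac{1}{z}+2z\sum_{n=1}^{\infty}\frac{1}{z^{2}-n^{2}},
\]
obtaining
\[
\pi\cot\!\left(\tfrac{\pi}{4\ip}\right)=4\ip+\frac{1}{2\ip}\sum_{n=1}^{\infty}\frac{1}{\tfrac{1}{16\ip^{2}}-n^{2}}=4\ip-8\ip\sum_{n=1}^{\infty}\frac{1}{16n^{2}\ip^{2}-1}.
\]
Solving this for the infinite series and substituting back $m=4n$ gives exactly
\[
\sum_{m=4,8,\ldots}\frac{1}{m^{2}\ip^{2}-1}=\frac{1}{2}\left(1-\frac{\pi}{4\ip}\cot\!\left(\tfrac{\pi}{4\ip}\right)\right),
\]
which is the claim.

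There is no real obstacle here; the only thing to be careful about is the algebraic bookkeeping of the factor $16\ip^{2}$ appearing in the denominator after the substitution and the sign flip needed to write $\tfrac{1}{16\ip^{2}}-n^{2}=-\tfrac{1}{16\ip^{2}}(16n^{2}\ip^{2}-1)$ before pulling constants out. The convergence of the series and validity of the cotangent expansion are unproblematic because $1/(4\ip)$ is not a nonzero integer for $\ip\geq 1$, so the cotangent is evaluated away from its poles. This completes the sketch.
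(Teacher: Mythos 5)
Your proof is correct and follows essentially the same route as the paper: the paper's entire proof consists of the remark that one repeats the argument of Lemma~\ref{l:cotangent} with $z=1/(4\ip)$ in place of $z=1/(2\ip)$, which is exactly the substitution you make, and your algebra (the factor $16\ip^{2}$ and the sign flip) checks out. No gaps.
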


\begin{proof}
The proof is the same as that of Lemma~\ref{l:cotangent}, except that
instead of setting $z=1/2{\bf i}$ we set $z=1/4\ip$.
\end{proof}

Using the lemma, we finally get the closed-form formula:
\begin{equation*} 
 \sqrt{p_{\mathrm{max}}^{\leftrightarrow}(\ip,\jp)}=
\frac{8}{\pi^2} \left(
\frac{\ip^2}{\ip^2-\jp^2}\left( \frac{\pi}{4\ip}\right) \cot \left( \frac{\pi}{4\ip}\right)
-\frac{\jp^2}{\ip^2-\jp^2}\left( \frac{\pi}{4\jp}\right) \cot \left( \frac{\pi}{4\jp}\right)
\right).
\end{equation*}

\paragraph{$\ip < 0 < \jp$}

The preceding formula remains valid, after replacing $i'$ by $-i'$. 

\subsubsection{$i'$ odd  and $j'$ even}

From the integral representation, we get
\begin{align*}
&\sqrt{p^{\leftrightarrow}_{\mathrm{max}}(i',j')}\\
&=\frac{4}{\pi^2} \sum_{p,q} \frac{1}{pq}\int_{-1/2}^{1/2}\left( (-1)^{\frac{p-1}{2}}c(p,q)\sum \cos(2\pi(i'(p-1)\pm j'(q-1))x)\right.\\
& \qquad -(-1)^{\frac{p-1}{2}}c(p,q)\sum \cos(2\pi(i'(p+1)\pm j'(q+1))x)\\
& \qquad -(-1)^{\frac{p-1}{2}}d(p,q)\sum \cos(2\pi(i'(p-1)\pm j'(q+1))x)\\
& \qquad +(-1)^{\frac{p-1}{2}}\left.d(p,q)\sum \cos(2\pi(i'(p+1)\pm j'(q-1))x)\right)dx.
\end{align*}
Next, we find that 
\begin{align*}
\sqrt{p^{\leftrightarrow}_{\mathrm{max}}(i',j')}
&=\frac{4}{\pi^2} \sum_{p,q} \frac{1}{pq}
  \left( (-1)^{\frac{p-1}{2}}c(p,q)\mathbb{I}_{2\pi(i'(p-1)\pm j'(q-1))x=0}\right.\\
& \qquad -(-1)^{\frac{p-1}{2}}c(p,q)\mathrm{I}_{2\pi(i'(p+1)\pm j'(q+1))x=0}\\
& \qquad -(-1)^{\frac{p-1}{2}}d(p,q)\mathrm{I}_{2\pi(i'(p-1)\pm j'(q+1))x=0}\\
& \qquad \left.+(-1)^{\frac{p-1}{2}}d(p,q)\mathrm{I}_{2\pi(i'(p+1)\pm j'(q-1))x=0}\right).
\end{align*}

Despite the extra difficulties created by the $c(\cdot,\cdot)$ and
$d(\cdot,\cdot)$ functions and the various signs, the pattern remains
the same as before: the indicators are nonvanishing only if
\[ p=m\jp \pm 1, \quad q=\ip \pm 1 \]
for some even $m$. 

\paragraph{$\ip, \jp>0$}

Since $i'$ is odd and $j'$ is even, $\mathrm{gcd}(i',j')$ does not
contain any positive power of $2$ in its prime divisors; therefore,
$\ip$ remains odd and $\jp$ remains even; in other words, $\ip+\jp$ is
odd.  From this observation, tedious but elementary manipulations lead
to the following:
\[ \left.\begin{array}{c}
c(m\jp+1,m\ip+1)\\
c(m\jp-1,m\ip-1)\\
d(m\jp+1,m\ip-1)\\
d(m\jp-1,m\ip+1)
\end{array}\right\}=
\left\{\begin{array}{cll}
1 & \mbox{for}&m=0,4,8,12,..., \\
0 & \mbox{otherwise}.&
\end{array}\right. \]

Next, tedious but elementary manipulation reveal that
\[ (-1)^{\frac{p-1}{2}}=
\left\{\begin{array}{ccc}
1 & \mbox{if} & p=m\jp + 1,\\
0 & \mbox{if} & p=m\jp - 1.
\end{array}\right.\]

Putting everything together yields
\begin{align*}
\sqrt{p_{\mathrm{max}}^{\leftrightarrow}(\ip,\jp)}
&=\frac{4}{\pi^2}\left(2 +\sum_{m=4,8,...}
   \frac{1}{(m\jp+1)(m\ip-1)} + \frac{1}{(m\jp+1)(m\ip+1)} \right. \\
& \qquad\qquad -\left.\frac{1}{(m\jp'+1)()m\ip-1}-\frac{1}{(m\jp-1)(m\ip+1)}\right)\\
&=\frac{4}{\pi^2}\left( 2+\sum_{m=4,8,...} \frac{4}{(m^2\jp^2-1)(m^2\ip^2-1)} \right)\\
&=\frac{4}{\pi^2}\left(  2+ \frac{4}{\ip^2-\jp^2}\sum_{m=4,8,...}\left(  \frac{\jp^2}{m^2\jp^2-1}-\frac{\ip^2}{m^2\ip^2-1}\right)\right).
\end{align*}
In order to derive the closed form solution, we invoke
Lemma~\ref{l:4_cotangent} and find that
\begin{equation*} 
 \sqrt{p_{\mathrm{max}}^{\leftrightarrow}(\ip,\jp)}=
\frac{8}{\pi^2} \left(
\frac{\ip^2}{\ip^2-\jp^2}\left( \frac{\pi}{4\ip}\right) \cot \left( \frac{\pi}{4\ip}\right)
-\frac{\jp^2}{\ip^2-\jp^2}\left( \frac{\pi}{4\jp}\right) \cot \left( \frac{\pi}{4\jp}\right)
\right).
\end{equation*}

\paragraph{$\ip < 0 <\jp$}

Again the preceding formula remains valid after replacing $i'$ by $-i'$. 

\bibliographystyle{plain}

\begin{thebibliography}{99}
\bibitem{eurasip_clustering}
F.~Ariaei, M.~Lou, E.~Jonckheere, B.~Krishnamachari, and M.~Zuniga.
\newblock Curvature of indoor sensor network: clustering coefficient.
\newblock {\em EURASIP Journal on Wireless Communications and Networking},
  2008:20 pages, 2008.
\newblock Article ID 213185; doi: 10.1155/2008/2131185.

\bibitem{baryshnikov_tucci}
Y.~Baryshnikov and G.~Tucci.
\newblock Asymptotic traffic flow in a hyperbolic network.
\newblock In {\em International Symposium on Communications, Control, and
  Signal Processing (ISCCSP)}, Rome, Italy, May 2-4 2012.

\bibitem{BridsonHaefliger1999}
Martin~R. Bridson and Andr\'e Haefliger.
\newblock {\em Metric Spaces of Non-Positive Curvature}, volume 319 of {\em A
  Series of Comprehensive Surveys in Mathematics}.
\newblock Springer, New York, NY, 1999.

\bibitem{4_point}
E.~Jonckheere, F.~Ariaei, and P.~Lohsoonthorn.
\newblock Scaled {G}romov four-point condition for network graph curvature
  computation.
\newblock {\em Internet Mathematics}, 7(3):137--177, August 2011.
\newblock DOI: 10.1080/15427951.2011.601233.

\bibitem{quantum_rome}
E.~Jonckheere, F.~C. Langbein, and S.~G. Schirmer.
\newblock Curvature of quantum rings.
\newblock In {\em Proceedings of the 5th International Symposium on
  Communications, Control and Signal Processing (ISCCSP 2012)}, Rome, Italy,
  May 2-4 2012.

\bibitem{scaled_gromov}
E.~Jonckheere, P.~Lohsoonthorn, and F.~Bonahon.
\newblock Scaled {G}romov hyperbolic graphs.
\newblock {\em Journal of Graph Theory}, 57:157--180, 2008.
\newblock DOI 10.1002/jgt.20275.

\bibitem{first_with_sophie}
E.~Jonckheere, S.~Schirmer, and F.~Langbein.
\newblock Geometry and curvature of spin networks.
\newblock In {\em IEEE Multi-Conference on Systems and Control}, pages
  786--791, Denver, CO, September 2011.
\newblock available at arXiv:1102.3208v1 [quant-ph].

\bibitem{internet_mathematics}
Edmond Jonckheere, Mingji Lou, Francis Bonahon, and Yuliy Baryshnikov.
\newblock Euclidean versus hyperbolic congestion in idealized versus
  experimental networks.
\newblock {\em Internet Mathematics}, 7(1):1--27, March 2011.

\bibitem{Jost1997}
J.~Jost.
\newblock {\em Nonpositive Curvature: Geometric and Analytic Aspects}.
\newblock Lectures in Mathematics. Birkhauser, Basel-Boston-Berlin, 1997.

\bibitem{mingji_thesis}
Mingji Lou.
\newblock {\em Traffic pattern analysis in negatively curved network}.
\newblock PhD thesis, Department of Electrical Engineering--Systems, University
  of Southern California, 2008.

\bibitem{arXiv_dmitri}
O.~Narayan and I.~Saniee.
\newblock Large-scale curvature of networks.
\newblock {\em Physical Review E}, 84:066108--1--8, 2011.

\bibitem{book_nielsen}
M.~A. Nielsen and I.~L. Chuang.
\newblock {\em Quantum computation and quantum information}.
\newblock Cambridge University Press, Cambridge, UK, 2000.

\bibitem{polarized_distance}
D.~A. Trifonov.
\newblock On the `polarized distances between quantum states and observables'.
\newblock arxiv.quant-ph/0410045v1, 6 October 2004.

\bibitem{sophie_long}
Xiatong Wang, Peter Pemberton-Ross, and Sophie~G. Schirmer.
\newblock Symmetry \& controllability for spin networks with a single-node
  control.
\newblock arXiv:1012.3695v2 [quant-ph] 17 Feb 2011, February 2011.

\bibitem{arccos}
W.~K. Wootters.
\newblock Statistical distance and {H}ilbert space.
\newblock {\em Phys. Rev. D}, 23:357--362, Jan 1981.
\end{thebibliography}

\end{document}